\theoremstyle{plain}
\newtheorem{thm}[equation]{Theorem}
\newtheorem{prop}[equation]{Proposition}
\newtheorem*{thm*}{Theorem}
\newtheorem*{prop*}{Proposition}
\newtheorem*{principle*}{Principle}
\theoremstyle{definition}
\newtheorem{lemma}[equation]{Lemma}
\newtheorem{defn}[equation]{Definition}
\newtheorem*{cor*}{Corollary}
\newtheorem*{lemma*}{Lemma}
\newtheorem*{defn*}{Definition}
\theoremstyle{remark}
\newtheorem{rem}[equation]{Remark}
\newtheorem{ex}[equation]{Example}
\newtheorem*{rem*}{Remark}
\newtheorem*{ex*}{Example} \numberwithin{equation}{section}
\tikzset{pf/.style={>=stealth,->,font=\scriptsize},
	surj/.style={->>},
	inj/.style={right hook->},
	bij/.style={above,sloped,inner sep=0.5pt},
	gl/.style={-,double},
	mat/.style={matrix of math nodes, row sep=2.5em, column sep = 2.5em, text height=1.5ex, text depth=0.25ex},
	dr/.style={matrix of math nodes, row sep=2.5em, column sep = 1.25em, text height=1.5ex, text depth=0.25ex},
	seq/.style={matrix of math nodes, row sep=2em, column sep = 2em, text height=1.5ex, text depth=0.25ex}}
\newenvironment{diag*}{\[\begin{tikzpicture}}{\end{tikzpicture}\]\ignorespacesafterend}
\newenvironment{diag}{\begin{equation}\begin{tikzpicture}[baseline=(current  bounding  box.center)]}{\end{tikzpicture}\end{equation}\ignorespacesafterend}
\title{Super Riemann surfaces, metrics, and gravitinos}
\author{
	\texorpdfstring{
		Jürgen Jost\thanks{jjost@mis.mpg.de}
		\and Enno Keßler\thanks{kessler@mis.mpg.de (corresponding author)}
		\and Jürgen Tolksdorf\thanks{tolksdor@mis.mpg.de}
	}
	{Jürgen Jost, Enno Keßler, and Jürgen Tolksdorf}
}
\date{}
\DeclareMathOperator{\Ber}{Ber}
\DeclareMathOperator{\Div}{div}
\DeclareMathOperator{\Diff}{Diff}
\DeclareMathOperator{\GL}{GL}
\DeclareMathOperator{\id}{id}
\DeclareMathOperator{\OGL}{O}
\DeclareMathOperator{\SDiff}{SDiff}
\DeclareMathOperator{\UGL}{U}
\newcommand{\ic}{\mathrm{i}}
\newcommand{\Top}[1]{{\|#1\|}}
\newcommand{\Smooth}[1]{{|#1|}}
\renewcommand{\d}{\mathop{}\!d}
\newcommand{\Dirac}{D \hspace{-2.7mm}\slash\hspace{0.75mm}}
\newcommand{\Lie}{L}
\newcommand{\bbC}{\mathbb{C}}
\newcommand{\bbR}{\mathbb{R}}
\newcommand{\bbZ}{\mathbb{Z}}
\newcommand{\cD}{\mathcal{D}}
\newcommand{\cI}{\mathcal{I}}
\newcommand{\cO}{\mathcal{O}}
\DeclareMathOperator{\SUSY}{SUSY}
\DeclareMathOperator{\susy}{susy}
\renewcommand*{\@fnsymbol}{\@arabic}
\begin{document}
\maketitle
\begin{abstract}
	The underlying even manifold of a super Riemann surface is a Riemann surface with a spinor valued differential form called gravitino.
	Consequently infinitesimal deformations of super Riemann surfaces are certain infinitesimal deformations of the Riemann surface and the gravitino.
	Furthermore the action functional of non-linear super symmetric sigma models, the action functional underlying string theory, can be obtained from a geometric action functional on super Riemann surfaces.
	All invariances of the super symmetric action functional are explained in super geometric terms and the action functional is a functional on the moduli space of super Riemann surfaces.
\end{abstract}

\addsec{Introduction}
Let \(\Smooth{M}\) be a compact closed two dimensional manifold.
In super string theory and super gravity one studies a super symmetric extension of the harmonic action functional where both the field \(\varphi\colon \Smooth{M}\to\bbR\) and the Riemannian metric \(g\) on \(\Smooth{M}\) get a super partner.
See for example~\cites{DZ-CASS}{BdVH-LSRIASS}.
Let \(S\) be a spinor bundle on \(\Smooth{M}\) with respect to a chosen spin structure and \(S^\vee\) its dual bundle.
Let \(\psi\) be a section of \(S^\vee\) and \(\chi\) a spinor valued differential form, i.e.\ a section of \(T^\vee\Smooth{M}\otimes_\bbR S\).
The super symmetric action functional is
\begin{multline}
	\label{AF}
	\tag{\textasteriskcentered}
	A(\varphi, \psi, g, \chi) = \int_{\Smooth{M}} \left(\vphantom{\frac12}\|\d\varphi\|_g^2 + \langle \psi, \Dirac\psi\rangle \right. \\
		\left. + 2\langle \gamma^a\gamma^b\chi_a,\psi\rangle\partial_{x^b}\varphi + \frac{1}{2}\langle\chi_a, \gamma^b\gamma^a\chi_b\rangle\langle\psi,\psi\rangle\right) \d vol_g
\end{multline}
This action is invariant under
\begin{itemize}
	\item Diffeomorphisms of \(\Smooth{M}\): \(A(\varphi\circ f, f^*\psi, f^*g, f^*\chi) = A(\varphi, \psi, g, \chi)\)
	\item Conformal transformations: \(A(\varphi, \psi, \lambda^2 g, \chi) =  A(\varphi, \psi, g, \chi)\)
	\item Super Weyl transformations: \(A(\varphi, \psi, g, \chi^k + \gamma^k s) = A(\varphi, \psi, g, \chi)\) 	\item Super symmetry:
		\begin{align}
			\delta\varphi &= \langle q, \psi \rangle & \delta\psi &= \left(\partial_{x^k}\varphi - \langle\psi, \chi_k\rangle\right)\gamma^k q\\
			\delta f_a &= -2\langle\gamma^b q, \chi(f_a)\rangle f_b & \delta\chi_a &= \nabla^S_{f_a}q
		\end{align}
		Here \(f_a\) is a \(g\)-orthonormal frame.
\end{itemize}

The aim of this paper is to clarify the relation of the action functional~\eqref{AF} to super Riemann surfaces.
The invariances of the action functional arise from geometric properties of super Riemann surfaces.
Super Riemann surfaces are an analogue of Riemann surfaces in super geometry.
This mathematical theory was developed already in the seventies for the treatment of super symmetric theories in high energy physics (see e.g.~\cites{L-ITS}{K-GMLTP}{M-GFTCG}).
The concept of a super Riemann surface appeared only a little later and their moduli space was studied, see for example~\cites{GN-GSRS}{LBR-MSRS}{CR-SRSUTT}{S-GAASTS}{DW-SMNP}.
But the precise connection between the super Riemann surfaces and the metric field \(g\) and the gravitino \(\chi\) remained unclear even though many conjectured a super Teichmüller theory that would study the moduli space of super Riemann surfaces (or a covering of it) in terms of the metric and the gravitino field.
The action functional~\eqref{AF} was claimed to arise from a particular Berezin integral on a super Riemann surface (e.g.~\cite{dHP-GSP}).
However, no explicit proof of this claim seems to exist.

In the first section of this article we introduce the concept of underlying even manifolds for families of super manifolds.
It will be shown that such an underlying even manifold \(\Smooth{M}\) exists for all super manifolds \(M\).
Any Berezin integral on \(M\) can then be reduced to an integral on \(\Smooth{M}\).

In the second section we will study the geometric structures induced on manifolds underlying super Riemann surfaces.
We will show that the geometry is completely determined by a metric \(g\) and a gravitino \(\chi\) on an underlying even manifold \(\Smooth{M}\).
This opens the possibility for a super Teichmüller theory, i.e.\ a theory of the moduli space of super Riemann surfaces in therms of metrics and gravitinos.
As a first step we study the tangent space to the moduli space of super Riemann surfaces, using metrics and gravitinos.

The aim of the third section is to demonstrate how the action functional~\eqref{AF} arises from a Berezin integral on a super Riemann surface.
The formulation in terms of the Berezin integral leads to a very clear geometrical interpretation of the symmetries of~\eqref{AF}.
Consequently the action functional~\eqref{AF} is a functional on the moduli space of super Riemann surfaces.
We give an interpretation of its energy momentum tensor and super current in terms of cotangent vectors to the moduli space.

In this paper, we present the main results of the second author's thesis (\cite{EK-DR}).
Some of the results in the last two sections rely on long and complicated computations.
In order not to overly burden the presentation we have omitted those and refer instead to the forthcoming thesis~\cite{EK-DR}.
 \section{Super Geometry}
We use the ringed space approach to super geometry (see, for example,~\cite{L-ITS}).
\begin{defn}
	A (smooth) super manifold is a locally ringed space \((\Top{M}, \cO_M)\) that is locally isomorphic to \(\bbR^{m|n} = (\bbR^m, C^\infty(\bbR^m, \bbR)\otimes_\bbR\Lambda_n)\).
	Here \(\Lambda_n\) is a real Grassmann algebra generated by \(n\) elements.
	A map of super manifolds \(f\colon M\to N\) is a map of locally ringed spaces.
	That is, a pair \((\Top{f}, f^\#)\) consisting of a continuous map \(\Top{f}\colon \Top{M}\to \Top{N}\) and a sheaf homomorphisms \(f^\#\colon \cO_N\to \cO_M\).
	It follows that the sheaf of rings \(\cO_M\) is a super commutative \(\bbZ_2\)-graded sheaf of rings.
	The elements of \(\cO_M\) will be called functions.
\end{defn}
Let \(x^a\), \(a=1,\ldots, m\) be the standard coordinate functions on \(\bbR^{m}\) and \(\eta^\alpha\), \(\alpha=1, \ldots, n\) be generators for \(\Lambda_n\).
Their lift to \(\cO_{\bbR^{m|n}}\) will be called coordinates for \(\bbR^{m|n}\).
We write \(X^A = (x^a, \eta^\alpha)\), using the convention, that small Latin letters refer to even objects, small Greek letters to odd ones and capital Latin indices refer to odd and even objects together.
Any function on \(\bbR^{m|n}\) can be expanded as
\begin{equation}
	f = \sum_{\underline{\alpha}} \eta^{\underline{\alpha}} f_{\underline{\alpha}}(x)
\end{equation}
where \(\underline{\alpha}\) is a \(\bbZ_2\)-multiindex and the \(f_{\underline{\alpha}}\) are smooth functions that can be expressed in the coordinates \(x^a\).
According to~\cite[Theorem 2.17]{L-ITS} any morphism between super domains \(U\subseteq\bbR^{m|n}\) and \(V\subseteq\bbR^{p|q}\) can be given in terms of coordinates.
\begin{ex}\label{ex:MapToR}
	Let \(X^A = (x^a, \eta^\alpha)\) be coordinates on \(\bbR^{2|2}\).
	Any map \(\varphi\colon\bbR^{2|2}\to\bbR\) is determined by the pullback of the coordinate \(r\) on \(\bbR\):
	\begin{equation}
		\varphi^\#r = f_0(x) + \eta^2\eta^1 f_{21}(x)
	\end{equation}
	Here \(f_0(x)\) and \(f_{21}(x)\) are smooth functions depending only on \(x^a\).
	Note that there is no term proportional to \(\eta^\alpha\) because the ring homomorphisms \(\varphi^\#\) preserve automatically the \(\bbZ_2\)-parity of the super functions.
\end{ex}
For the applications we have in mind one expects the full Taylor expansion.
Therefore we need to work with families of super manifolds.
\begin{defn}[\cite{L-ITS}]
	A submersion \(p_M\colon M\to B\) of super manifolds is also called a family of super manifolds over \(B\).
	A morphism \(f\) of families of super manifolds from \(p_M\colon M\to B\) to \(p_N\colon N\to B\) is a morphism \(f\colon M\to N\) such that \(p_N\circ f = p_M\).
	Any super manifold is a family over \(\bbR^{0|0}\).
	Any family is locally a projection \(\bbR^{m|n}\times B\to B\). We call \(m|n\) the dimension of the family.
\end{defn}
\begin{ex}\label{ex:TaylorExpBase}
	Consider the trivial families of super manifolds given by \(\bbR^{2|2}\times B\) and \(\bbR\times B\).
	A map \(\varphi\colon \bbR^{2|2}\times B\to \bbR\times B\) of families over \(B\) is now again given by the pullback of the coordinate function \(r\) on \(\bbR\), the map on the \(B\)-factor is determined by the properties of maps of families over \(B\).
	But this time all coefficients in the coordinate expansion can appear (using the Einstein summation convention):
	\begin{equation}
		\varphi^\#r = f_0(x) + \eta^\mu f_\mu(x) + \eta^2\eta^1 f_{21}(x)
	\end{equation}
	Here \(f_0(x)\), \(f_\mu(x)\), and \(f_{21}(x)\) are functions on \(\bbR^{2|0}\times B\).
	For all open \(U\) the ring homomorphisms  \(\left.\varphi^\#\right|_U\) must be even.
	This implies that \(f_0\) and \(f_{12}\) are even functions, whereas the functions \(f_\mu\) must be odd.
\end{ex}
\begin{lemma}[{\cite[Remark 2.6.(v)]{DM-SUSY}}]\label{lemma:Pullback}
	Let \(b\colon B'\to B\) a morphism of super manifolds and \(p_M\colon M\to B\) a family of super manifolds over \(B\).
	Then there exists a unique family of super manifolds \(p_{M'}\colon M'\to B'\) and a morphism \(p\colon M'\to M\) over \(b\).
\end{lemma}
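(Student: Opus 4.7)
My plan is to realize the desired pair $(p_{M'}, p)$ as the fiber product $M \times_B B'$ in the category of super manifolds, exploiting that $p_M$ is a submersion to construct it explicitly by gluing local models, and then to deduce uniqueness up to unique isomorphism from the universal property of fiber products.

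\textbf{Local construction and gluing.} Since $p_M$ is a submersion, I would cover $M$ by open sub-super-manifolds $V_i$ equipped with isomorphisms $V_i \cong \bbR^{m|n} \times U_i$ for open $U_i \subseteq B$, under which $p_M|_{V_i}$ corresponds to the projection onto $U_i$. For each $i$, set $U_i' := \Top{b}^{-1}(U_i) \subseteq \Top{B'}$, which is open by continuity of $\Top{b}$, and equip it with the restricted structure sheaf so that $U_i' \subseteq B'$ is an open sub-super-manifold. Define the local candidate $V_i' := \bbR^{m|n} \times U_i'$ as a trivial family over $U_i'$ with projection $p_{V_i'}$, together with the morphism $p_i := \id_{\bbR^{m|n}} \times b|_{U_i'} \colon V_i' \to V_i$, which by construction satisfies $p_M \circ p_i = b \circ p_{V_i'}$. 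On overlaps $V_i \cap V_j$, the transition isomorphisms $\phi_{ij}$ are morphisms of families over $U_i \cap U_j$ and hence, as in Example \ref{ex:TaylorExpBase}, are encoded by Grassmann-valued coordinate expressions on the base; precomposing these with $b^\#$ yields transition isomorphisms $\phi_{ij}'$ between the $V_i'$ over $U_i' \cap U_j'$. Functoriality of sheaf pullback ensures $\phi_{ij}'\phi_{jk}' = \phi_{ik}'$, so the $V_i'$ glue to a super manifold $M'$ carrying a well-defined submersion $p_{M'}\colon M' \to B'$ and a morphism $p\colon M' \to M$ with $p_M \circ p = b \circ p_{M'}$.

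\textbf{Uniqueness and main obstacle.} The resulting triple satisfies the universal property of the fiber product: given any other family $p_{M''}\colon M'' \to B'$ with morphism $q\colon M'' \to M$ over $b$, the restriction of $(q, p_{M''})$ to each preimage $p_{M''}^{-1}(U_i')$ factors through the direct product $V_i' = \bbR^{m|n} \times U_i'$ in a unique way, and these local factorizations patch together to a unique morphism $M'' \to M'$ over $B'$. The main technical point is the gluing step: the transition morphisms $\phi_{ij}$ typically depend on arbitrary Grassmann polynomials in the odd coordinates of $B$, so one must verify that the ringed-space pullback along $b^\#$ is well-defined on such expressions and preserves the cocycle identity. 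This, however, is simply the functoriality of composition of sheaf morphisms of super commutative rings, so the obstacle is essentially one of bookkeeping rather than substance.
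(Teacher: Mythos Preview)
The paper does not supply its own proof of this lemma; it merely records the statement with a citation to \cite[Remark 2.6.(v)]{DM-SUSY} and moves on. So there is no in-paper argument to compare your proposal against.

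Your argument is correct and is the standard one: realize \(M'\) as the fiber product \(M\times_B B'\), constructed by local trivializations afforded by the submersion \(p_M\) and glued via base-changed transition data, with uniqueness (up to unique isomorphism) coming from the universal property. One small point worth tightening in the gluing step: the overlap \(V_i\cap V_j\) need not itself be a product \(\bbR^{m|n}\times(\text{open in }B)\); it is only an open sub-super-manifold on which both trivializations restrict. What you actually use is that the transition \(\phi_{ij}\) is a morphism over \(B\), hence determined by its coordinate functions, and these pull back along \(\id_{\bbR^{m|n}}\times b\) to give \(\phi_{ij}'\). With that phrasing the cocycle check is, as you say, pure functoriality of composition of sheaf morphisms, and the rest goes through.
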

According to Lemma~\ref{lemma:Pullback} it is not necessary to fix \(B\).
However \(B\) is always supposed to be “big enough”, see Example~\ref{ex:TaylorExpBase}.
Henceforth, all super manifolds and maps of super manifolds are implicitly to be understood as families of super manifolds and morphisms of families of super manifolds.
In particular, also \(\bbR^{m|n}\) is to be understood as the trivial family \(\bbR^{m|n}\times B\).

Many geometric concepts known from smooth manifolds carry over to families of super manifolds and are functorial under base change.
Examples such as tangent bundles, vector bundles, differential forms and Lie groups can be found in~\cite{DM-SUSY}.
A construction that has no analogue in differential geometry is that of the underlying even manifold.
\begin{defn}
	Let \(M=(\Top{M},\cO_M)\) be a family of super manifolds of dimension \(m|n\) over \(B\).
	A family of super manifolds \(\Smooth{M}=(\Top{M},\cO_{\Smooth{M}})\) of dimension \(m|0\) together with an embedding of families of super manifolds \(i\colon \Smooth{M}\to M\) that is the identity on the underlying topological space is called an underlying even manifold.
\end{defn}
\begin{lemma}[see e.g.~\cite{DM-SUSY}]
	Let \(M = (\Top{M},\cO_M)\) be a super manifold over \(\bbR^{0|0}\) of dimension \(m|n\).
There exists a unique underlying even manifold \(i\colon \Smooth{M}\to M\).
\end{lemma}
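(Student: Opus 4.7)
My plan is to realize $\Smooth{M}$ as the quotient of $M$ by the ideal sheaf generated by its odd sections, and to prove uniqueness by showing that every embedding of an $m|0$-dimensional super manifold into $M$ is forced to annihilate exactly this ideal.

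For existence, I would first define $\mathcal{J}\subset \cO_M$ to be the sheaf of ideals generated by odd sections. Since morphisms of super manifolds automatically preserve the $\bbZ_2$-grading (cf.\ Example~\ref{ex:MapToR}), the notion of an odd section is intrinsic, and $\mathcal{J}$ is well-defined independently of the choice of chart. Setting $\cO_{\Smooth{M}}:=\cO_M/\mathcal{J}$, one checks in a local chart $U\cong \bbR^{m|n}$ with coordinates $(x^a,\eta^\alpha)$ that every section $f\in \cO_M(U)$ expands as $f=\sum_{\underline{\alpha}}\eta^{\underline{\alpha}} f_{\underline{\alpha}}(x)$, so that reducing modulo $\mathcal{J}$ returns $f_\emptyset(x)\in C^\infty(\Top{U})$. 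This exhibits $\cO_{\Smooth{M}}$ as locally isomorphic to the structure sheaf of $\bbR^{m|0}$, hence $\Smooth{M}:=(\Top{M},\cO_{\Smooth{M}})$ is a super manifold of dimension $m|0$. The quotient homomorphism together with $\id_{\Top{M}}$ then supplies the embedding $i\colon \Smooth{M}\to M$.

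For uniqueness, let $i'\colon \Smooth{M}'\to M$ be a second underlying even manifold. The key observation is that $\cO_{\Smooth{M}'}$, being the structure sheaf of an $m|0$-dimensional super manifold, carries no nonzero odd sections. Because $(i')^\#$ is parity preserving, every odd section of $\cO_M$ lies in its kernel; since this kernel is an ideal, it must contain all of $\mathcal{J}$. Consequently $(i')^\#$ factors through a surjection $\cO_{\Smooth{M}}\to \cO_{\Smooth{M}'}$ covering $\id_{\Top{M}}$. I would finish by choosing local coordinates $y^a$ on $\Smooth{M}'$ and verifying that their pullbacks via $i'$ project to coordinates on $\Smooth{M}$, realizing the induced map as the identity in these charts and therefore as an isomorphism.

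The substance is concentrated in existence, specifically in verifying that the quotient sheaf is locally $C^\infty$ of the right dimension; this rests on the parity preservation of super morphisms and on the explicit coordinate expansion recalled in Example~\ref{ex:MapToR}. Uniqueness is then almost formal, as the absence of odd sections in $\cO_{\Smooth{M}'}$ removes any freedom in pulling back odd coordinates. The only mildly delicate point I anticipate is verifying that the local quotient construction glues into a global sheaf making $\Smooth{M}$ an honest smooth manifold, but this follows automatically from the naturality of $\mathcal{J}$ under coordinate changes.
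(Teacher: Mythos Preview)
Your approach is correct and essentially identical to the paper's: you quotient by the ideal \(\mathcal{J}\) generated by odd sections, while the paper quotients by the ideal \(\mathcal{I}\) of nilpotent elements, and over the base \(\bbR^{0|0}\) these two ideals coincide (a section is nilpotent iff its \(\eta\)-degree zero part vanishes). Your treatment of uniqueness is in fact more explicit than the paper's, which simply exhibits the quotient map and leaves the uniqueness argument implicit.
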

\begin{proof}
	Let \(\cI\subset\cO_M\) be the ideal sheaf of nilpotent elements.
	Then \(\Smooth{M} = (\Top{M},\cO_M/\cI)\) is a manifold of dimension \(m\).
	Furthermore the canonical projection \(i^\#\colon \cO_M\to \cO_M/\cI\) gives an embedding \(i\colon \Smooth{M}\to M\) which is the identity on the underlying topological space \(\Top{M}\).
\end{proof}
Though this concept was known for super manifolds it was to our knowledge never studied for families of super manifolds.
However in the case of families the uniqueness of the underlying manifold is lost, as is already seen in the following example:
\begin{ex}\label{ex:UnderlyingEvenManifoldRelRmn}
	Consider morphisms \(i\colon \bbR^{m|0}\times B \to \bbR^{m|n}\times B\).
	Denote the standard coordinates on \(\bbR^{m|0}\) by \(y^a\) and the standard coordinates on \(\bbR^{m|n}\) by \((x^b, \eta^\beta)\).
	Then \(i\) can be expressed in coordinates:
	\begin{align}
		i^\# x^b &= y^b & i^\# \eta^\beta &= {f(y)}^\beta
	\end{align}
	The first equation is given by the fact that \(i\) should be the identity on \(\Top{M}\), but the functions \({f(y)}^\beta\) are arbitrary odd functions on \(\bbR^{m|0}\times B\).

	It is always possible to find coordinates \((\tilde{x}^b, \tilde{\eta}^\beta)\) on \(\bbR^{m|n}\) such that \(i^\#\tilde{\eta}^\beta = 0\).
	Indeed, using the coordinate transformation
	\begin{align}
		\tilde{x}^b &= x^b & \tilde{\eta}^\beta &= -{f(x)}^\beta + \eta^\beta
	\end{align}
	assures \(i^\#\tilde{\eta}^\beta = 0\).

	There are automorphisms \(g\) of \(\bbR^{m|n}\times B\) such that \(i\circ g = i\).
	Those can best be expressed in the coordinates \(\tilde{x}^b, \tilde{\eta}^\beta\):
	\begin{align}
		g^\# \tilde{x}^b &= \tilde{x}^b + \tilde{\eta}^\mu g_\mu^b(\tilde{x},\tilde{\eta}) &
		g^\# \tilde{\eta}^\beta &= \tilde{\eta}^\mu g_\mu^\beta(\tilde{x}, \tilde{\eta})
	\end{align}
	The functions \(g_\mu^B\) are arbitrary functions on \(\bbR^{m|n}\times B\) with appropriate parity.
\end{ex}
\begin{thm}
\label{thm:ExistenceUnderlyingManifold}
	Let \(M=(\Top{M}, \cO_M)\) be a family of super manifolds over \(B\).
	Let furthermore \(\Top{U}\subseteq \Top{M}\) a subset (which could also be empty) such that there is an underlying even manifold \(\Smooth{U}\) with given embedding \(i_U \colon \Smooth{U}\to U\).
	There exists an underlying manifold \(\Smooth{M}\) and an embedding \(i\colon \Smooth{M}\to M\) such that \(\Smooth{U}\) coincides with \(\Smooth{M}\) and \(i\) with \(i_U\) over \(\Top{U}\).
\end{thm}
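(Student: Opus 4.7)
My plan is to reduce to the local model of Example~\ref{ex:UnderlyingEvenManifoldRelRmn}, in which local underlying even manifolds exist and the set of such choices on any chart forms a torsor over a sheaf of odd functions on the reduced manifold, and then to glue compatible local choices together by means of a smooth partition of unity on $\Top{M}$.

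First I would cover $\Top{M}$ by a locally finite family $\{V_\alpha\}_{\alpha\in A}$ of open sets with coordinate isomorphisms $M|_{V_\alpha}\cong\bbR^{m|n}\times B$, arranged so that a subfamily $A_U\subseteq A$ already covers $\Top{U}$; on each $V_\alpha$ with $\alpha\in A_U$ I would pick coordinates adapted to $i_U$ in the sense of Example~\ref{ex:UnderlyingEvenManifoldRelRmn}, meaning $i_U^\#\eta^\mu_\alpha = 0$. On every $V_\alpha$ I would then define a local underlying even manifold $i_\alpha\colon\Smooth{V_\alpha}\to V_\alpha$ by the coordinate prescription $i_\alpha^\#\eta^\mu_\alpha = 0$; for $\alpha\in A_U$ this automatically agrees with $i_U$ over $V_\alpha\cap\Top{U}$. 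On overlaps $V_\alpha\cap V_\beta$, Example~\ref{ex:UnderlyingEvenManifoldRelRmn} produces odd functions $h^\mu_{\alpha\beta}$ on the reduced overlap encoding the difference between $i_\alpha$ and $i_\beta$, and these obey a Čech cocycle identity on triple overlaps coming from composition of the corresponding transition automorphisms of $M$. Next I would choose a partition of unity $\{\rho_\alpha\}$ subordinate to $\{V_\alpha\}$ with the additional property that $\rho_\alpha\equiv 0$ on $\Top{U}$ whenever $\alpha\notin A_U$. Because the sheaf of odd functions in which the $h^\mu_{\alpha\beta}$ take values is soft (it admits partitions of unity), the standard weighted-average prescription
\begin{equation}
    i^\#\eta^\mu_\alpha := \sum_{\gamma\in A}\rho_\gamma\,h^\mu_{\gamma\alpha}
\end{equation}
should then produce modified local embeddings that agree on every double overlap and hence glue to a global embedding $i\colon\Smooth{M}\to M$. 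On $\Top{U}$ only terms with $\gamma\in A_U$ contribute, and for such $\gamma$ the $h^\mu_{\gamma\alpha}$ vanish by the adapted coordinate choice, so $i$ would restrict to $i_U$ as required.

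The hard part will be to make the torsor structure on local underlying even manifolds precise at the level of coordinate pullbacks, and to verify both the cocycle identity for the $h^\mu_{\alpha\beta}$ and the fact that the weighted-average prescription transforms correctly under the coordinate changes between $V_\alpha$ and $V_\beta$. Both verifications hinge on the explicit form of the transition automorphisms in Example~\ref{ex:UnderlyingEvenManifoldRelRmn}: their nonlinear higher-order corrections in the odd coordinates are controlled by the nilpotency of the $\eta^\mu_\alpha$, so that the relevant identities close after finitely many iterations in the odd degree. Once this bookkeeping is in place the rest of the argument is standard gluing, and I do not expect further obstacles.
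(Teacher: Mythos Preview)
Your approach differs from the paper's. The paper does not set up a \v{C}ech cocycle and resolve it by a partition of unity; instead it orders a countable adapted cover \(V_1, V_2,\ldots\) (with the first \(l\) charts covering \(\Top{U}\)) and extends the embedding chart by chart: assuming \(i\) is already defined on \(\bigcup_{j\le k}V_j\), one writes \(i|_{V_{k+1}}\) in adapted coordinates as \(i^\#\eta_{k+1}^\alpha=\xi^\alpha\) and simply chooses smooth odd functions \(\xi^\alpha\) extending the values already prescribed on the overlap. The only analytic input is that smooth functions extend from an open subset, and no compatibility on triple overlaps ever has to be checked.

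Your weighted-average formula, as written, has a genuine gap. The local embeddings in a fixed chart are indeed parametrized by tuples of odd functions, but the transition \(\eta^\nu_\beta=g^\nu(x_\alpha,\eta_\alpha)\) between charts is in general \emph{nonlinear} in \(\eta_\alpha\): over a nontrivial base \(B\) there are higher-order terms \(\eta^{\mu_1}\cdots\eta^{\mu_k}\) with odd coefficients from \(B\). Hence
\[
g^\nu\Bigl(y_\alpha,\ \textstyle\sum_\gamma\rho_\gamma\,h_{\gamma\alpha}\Bigr)
\quad\text{and}\quad
\textstyle\sum_\gamma\rho_\gamma\,g^\nu\bigl(y_\alpha,\,h_{\gamma\alpha}\bigr)
\]
differ by cross terms, so the prescription \(i^\#\eta^\mu_\alpha=\sum_\gamma\rho_\gamma h^\mu_{\gamma\alpha}\) does \emph{not} define the same map when read in the \(\beta\)-chart. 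In other words the local embeddings do not form a torsor over a fixed sheaf of abelian groups, and the one-shot partition-of-unity trick for abelian \v{C}ech \(1\)-cocycles does not apply. There is a related subtlety you pass over: the domains \(\Smooth{V_\alpha}\) themselves depend on the embedding (see the Remark immediately following the theorem), so before averaging you must first say over which ringed space the functions \(h^\mu_{\gamma\alpha}\) live and how the various \(\Smooth{V_\alpha}\) are to be identified on overlaps.

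Your suggested remedy---``close after finitely many iterations in the odd degree''---can be made to work: filtering by powers of the odd ideal turns each successive step into an honest torsor under a soft sheaf of modules, and then the obstruction vanishes. But that iterative argument is different from the single average you wrote down, and carrying it out carefully is essentially the entire proof rather than residual bookkeeping. The paper's inductive extension avoids all of this.
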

\begin{proof}
	From Example~\ref{ex:UnderlyingEvenManifoldRelRmn} we know that the claim is true for super domains.
	It remains to be shown that those local solutions can be put together to a global one.
	To this end cover the family \(p_M\colon M\to B\) of relative dimension by countably many adapted coordinate charts \(V_i\).
	That is we assume that each \(V_i\) can be written as a product \(V_i=W_i\times p_M(V_i)\) with coordinates \(X_i^A = (x_i^a, \eta_i^\alpha)\) on \(W_i\).
	We assume furthermore that the first \(l\) coordinate charts cover \(U\), i.e.\ it holds for \(j\leq l\) that \(\Top{V_j}\subseteq\Top{U}\) and \(\bigcup_{j=1}^{j=l}\Top{V_j} = \Top{U}\).
	We are now going to construct a family \(p_{\Smooth{M}}\colon \Smooth{M}\to B\) of relative dimension \(m|0\) and a map \(i\colon\Smooth{M}\to M\) over \(B\) by their restrictions to the open sets of the cover \(V_i\) in ascending order and glueing them together.

	If \(U=\emptyset\) choose an arbitrary embedding \(i|_{V_1}\colon \Smooth{V_1} \to V_1\) over \(p_M(V_1)\) as in Example~\ref{ex:UnderlyingEvenManifoldRelRmn}.

	Suppose now that we have a consistent structure of a \(m|0\)-dimensional manifold for \(\bigcup_{i=1}^j V_i\) together with the embedding \(i|_{\bigcup_{i=1}^j V_i}\) to the \(m|n\)-dimensional super manifold.
	Notice that \(i|_{\bigcup_{i=1}^j V_i}\) is a family of maps over \(\bigcup_{i=1}^j p_M(V_i)\).
	We need to show that it is possible to extend the manifold structure and the embedding to \(\bigcup_{i=1}^{j+1}V_i\).
	That is, we need to choose a map \(i|_{V_{j+1}}\colon \Smooth{V_{j+1}} \to V_{j+1}\) over \(p_M(V_{j+1})\) that coincides with the already constructed \(i|_{\bigcup_{i=1}^j V_i}\) on \(V_{j+1}\cap\bigcup_{i=1}^j V_i\).
	As we have seen in Example~\ref{ex:UnderlyingEvenManifoldRelRmn} such a map \(i|_{V_{j+1}}\) is given in the adapted coordinates by
	\begin{align}
		i|_{V_{j+1}}^\#x_{j+1}^a &= x^a & i|_{V_{j+1}}^\#\eta_{j+1}^\alpha &= \xi^\alpha
	\end{align}
	for some odd function \(\xi^\alpha \in\cO_{\Smooth{V_{j+1}}}\).
	By the condition that \(i|_{V_{j+1}}\) coincides with the already constructed \(i|_{\bigcup_{i=1}^j V_i}\) on \(V_{j+1}\cap\bigcup_{i=1}^j V_i\) the function \(\xi^\alpha\) is prescribed on \(V_{j+1}\cap\bigcup_{i=1}^j V_i\).
	Because we work with smooth functions into contractible spaces it is possible to construct functions \(\xi^\alpha\) that extend the given one.
	The projections \(p_{\Smooth{M}}|_{\Smooth{V_{j+1}}}\) and \(p_M|_{V_{j+1}}\) are both given by the projection to the second factor of \(V_{j+1} = W_{j+1}\times p_M(V_{j+1})\).
	Consequently, \(i|_{V_{j+1}}\) can be glued with \(i|_{\bigcup_{i=1}^j V_i}\) to give a well defined map \(i|_{\bigcup_{i=1}^{j+1} V_i}\) over \(\bigcup_{i=1}^{j+1}p_M(V_i)\).

	Continuing inductively, we construct the family \(p_{\Smooth{M}}\colon\Smooth{M}\to B\) together with \(i\colon\Smooth{M}\to M\) over \(B\) that coincides with the given \(i|_U\) on \(\Top{U}\).
\end{proof}
\begin{rem}
	In the construction presented above, also the underlying manifold depends on the choices made.
	Different underlying manifolds might not only differ in the embedding \(i\colon \Smooth{M}\to M\), but also in the manifold structure of \(\Smooth{M}\).
	This can be seen by considering the induced coordinate changes on \(\Smooth{M}\).
	Let \(X^A = (x^a, \eta^\alpha)\) and \(Y^B = (y^b, \theta^\beta)\) two different adapted coordinate systems on \(V\times B\). The coordinate change is given by
	\begin{align}
		y^b &= \sum_{\underline{\gamma}}\eta^{\underline{\gamma}}{f(x)}^b_{\underline{\gamma}} & \theta^\beta = \sum_{\underline{\gamma}}\eta^{\underline{\gamma}}{f(x)}^\beta_{\underline{\gamma}}.
	\end{align}
	Let \(i\colon \Smooth{V}\to V\) be given by
	\begin{align}
		i^\#x^a &= x^a & i^\#\eta^\alpha = \xi^\alpha.
	\end{align}
	In the second coordinates \((y^b,\theta^\beta)\) the embedding \(i\) is given as
	\begin{align}
		i^\#y^b &= y^b = \sum_{\underline{\gamma}}\xi^{\underline{\gamma}}{f(x)}^b_{\underline{\gamma}} & i^\#\theta^\beta = \sum_{\underline{\gamma}}\xi^{\underline{\gamma}}{f(x)}^\beta_{\underline{\gamma}}.
	\end{align}
	As a consequence the coordinate change on \(\Smooth{V}\) is
	\begin{align}
		y^b = \sum_{\underline{\gamma}}\xi^{\underline{\gamma}}{f(x)}^b_{\underline{\gamma}}.
	\end{align}
	Therefore the coordinate changes explicitly depend on the chosen embedding \(i\), resp. \(\xi^\alpha\).
\end{rem}

The theory of integration for families of super manifolds is sketched in~\cite[§3.10]{DM-SUSY}.
For fiberwise compact, oriented families of super manifolds integration is an \(\cO_B\)-linear functional
\begin{equation}
	\int_{M}\colon \Ber T^\vee M\to \cO_B
\end{equation}
from the Berezinian of the cotangent bundle to the functions on \(B\).
The Berezinian is the generalization of the determinant bundle to super geometry.
Integration is given in local coordinates \((x^a,\eta^\alpha)\) by
\begin{equation}\label{eq:IntCoord}
	\int_{\bbR^{m|n}} g(x,\eta)[\d x^1\ldots\d x^m\d\eta^1\ldots\d\eta^n]  = \int_{\bbR^{m|0}} g_{top}(x) \d x^1\ldots\d x^m
\end{equation}
where \(g_{top}\) is the coefficient of \(\eta^n\cdot\cdots\cdot\eta^1\) of in the coordinate expansion of \(g\).
\begin{prop}\label{prop:Integration}
	Let \(i\colon \Smooth{M}\to M\) be the embedding of an underlying even manifold for a family \(M\) of fiberwise compact, orientable super manifolds over \(B\). For any section \(b\) of \(\Ber T^\vee M \) there exists a top form \(|b|\) on \(\Smooth{M}\) such that
	\begin{equation}
		\int_M b = \int_{\Smooth{M}} |b|.
	\end{equation}
\end{prop}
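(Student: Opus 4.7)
The plan is to build \(|b|\) locally in adapted coordinates that bring the embedding \(i\) into normal form, and then to show that these local expressions patch into a global top form. Once this is done, the identity \(\int_M b = \int_{\Smooth{M}}|b|\) reduces chart-by-chart to~\eqref{eq:IntCoord} via a partition of unity on \(\Top{M}\).

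For the local step, I would cover \(M\) by adapted coordinate charts \(V_j = W_j\times p_M(V_j)\), and by Theorem~\ref{thm:ExistenceUnderlyingManifold} together with Example~\ref{ex:UnderlyingEvenManifoldRelRmn} choose coordinates \((x_j^a, \eta_j^\alpha)\) so that \(i^\# x_j^a = x_j^a\) and \(i^\# \eta_j^\alpha = 0\). Writing \(b|_{V_j} = g_j(x_j,\eta_j)\bigl[\d x_j^1\ldots \d x_j^m\,\d\eta_j^1\ldots \d\eta_j^n\bigr]\), I define
\[
	|b|_j := i^\# g_{j,\mathrm{top}}\cdot \d(i^\# x_j^1)\wedge\cdots\wedge \d(i^\# x_j^m)
\]
on \(\Smooth{V_j}\), where \(g_{j,\mathrm{top}}\) is the coefficient of \(\eta_j^n\cdots\eta_j^1\) in the expansion of \(g_j\). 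Formula~\eqref{eq:IntCoord} then immediately yields \(\int_{V_j}b = \int_{\Smooth{V_j}}|b|_j\) for \(b\) compactly supported in \(V_j\).

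The main obstacle is to verify that \(|b|_j = |b|_k\) on every overlap \(V_j\cap V_k\). Expanding the transition as \(x_k^b = \sum_{\underline{\gamma}} f^b_{\underline{\gamma}}(x_j)\eta_j^{\underline{\gamma}}\) and \(\eta_k^\beta = \sum_{\underline{\gamma}} h^\beta_{\underline{\gamma}}(x_j)\eta_j^{\underline{\gamma}}\), the condition \(i^\#\eta_k = 0\) forces \(h^\beta_\emptyset = 0\), so every \(\eta_k^\beta\) has \(\eta_j\)-degree at least one. Two consequences follow: \(\eta_k^n\cdots\eta_k^1 = \det(h^\beta_\alpha)\,\eta_j^n\cdots\eta_j^1\), since higher-order terms exceed the maximal \(\eta_j\)-degree \(n\); and \(i^\#(\partial\eta_k^\beta/\partial x_j^a) = 0\), so the odd-even block of the Jacobian \(J\) contributes nothing under \(i^\#\), giving \(i^\#\Ber(J) = \det(\partial f^b_\emptyset/\partial x_j^a)/\det(h^\beta_\alpha)\). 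Substituting these into the transformation identity \(g_j = g_k\cdot\Ber(J)\) and extracting the coefficient of \(\eta_j^n\cdots\eta_j^1\), the factor \(\det(h^\beta_\alpha)\) cancels and one obtains
\[
	i^\# g_{j,\mathrm{top}} = \det(\partial f^b_\emptyset/\partial x_j^a)\cdot i^\# g_{k,\mathrm{top}},
\]
which is precisely the transformation law of a top form on \(\Smooth{M}\) under the induced body coordinate change \(\tilde x_k^b = f^b_\emptyset(\tilde x_j)\). Hence \(|b|_j\) and \(|b|_k\) define the same top form on \(\Smooth{V_j\cap V_k}\).

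Gluing the \(|b|_j\) produces a global top form \(|b|\) on \(\Smooth{M}\), and a partition of unity on \(\Top{M}\) subordinate to \(\{V_j\}\), together with \(\cO_B\)-linearity of both integrals, promotes the chart-wise identity to \(\int_M b = \int_{\Smooth{M}}|b|\).
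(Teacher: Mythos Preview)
Your overall strategy matches the paper's: define \(|b|\) locally in adapted coordinates with \(i^\#\eta^\alpha=0\) and then check behaviour under coordinate change. The problem is in the step where you extract the top \(\eta_j\)-coefficient from \(g_j = g_k\cdot\Ber(J)\). You implicitly assume that the only contribution to the \(\eta_j^n\cdots\eta_j^1\)-coefficient of \(g_k\bigl(X_k(X_j)\bigr)\) comes from the term \(\eta_k^n\cdots\eta_k^1\,g_{k,\mathrm{top}}(x_k)\). But lower \(\eta_k\)-degree pieces of \(g_k\) also feed into the top \(\eta_j\)-coefficient through the \(\eta_j\)-dependence of \(x_k = f^b_\emptyset(x_j) + \eta_j(\ldots)\), and likewise the higher \(\eta_j\)-terms of \(\Ber(J)\) cannot simply be dropped. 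Concretely, over \(B=\bbR^{0|1}\) with odd parameter \(\epsilon\), take \(M=\bbR^{1|1}\times B\), \(b=g(x)\,[dx\,d\eta]\), and the coordinate change \(y=x+\epsilon\eta\), \(\theta=\eta\) (both systems satisfy \(i^\#\) of the odd coordinate \(=0\)). Then \(g_{j,\mathrm{top}}=0\) while \(g_{k,\mathrm{top}}=-\epsilon\, g'(y)\), so your asserted relation \(i^\#g_{j,\mathrm{top}} = \det(\partial f^b_\emptyset/\partial x_j)\cdot i^\#g_{k,\mathrm{top}}\) is false. The two local candidates \(|b|_j=0\) and \(|b|_k=-\epsilon\, g'(y)\,dy\) differ by an exact form, not by zero.

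The paper deals with this by first invoking Leites to reduce to sections of the special form \(b=\eta^n\cdots\eta^1 g(x)\,[dX]\). For such \(b\) one checks directly that the transformed section is again purely of top degree in the new chart, and then the cancellation of \(\det(h^\beta_\alpha)\) against the odd--odd block of the Berezinian really does go through exactly as you wrote. What remains --- the lower-degree part of a general \(b\) --- contributes only an exact form to the top coefficient in the new chart; this is the content of the Leites reference and is what underlies coordinate-independence of the Berezin integral itself. That exactness, rather than a literal equality \(|b|_j=|b|_k\), is the missing ingredient; it suffices for the integral identity and for the partition-of-unity globalization you describe at the end.
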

\begin{proof}
	The formula~\eqref{eq:IntCoord} already gives the local definition of \(|b|\) in coordinates.
	We only have to show that it transforms under coordinate change as expected.
	We can restrict our attention to the case of coordinates \(X^A = (x^a, \eta^\alpha)\) and \(Y^B = (y^b, \theta^\beta)\) such that \(i^\#\eta^\alpha = 0\) and \(i^\#\theta^\beta = 0\).
	Then the change of coordinates is given to lowest order in \(\theta\) by
	\begin{align}
		x^a &= f^a(y) + \theta\dots & \eta^\alpha &= \theta^\mu f^\alpha_\mu(y) + \theta^2\dots
	\end{align}
	It follows that
	\begin{equation}
		\Ber\frac{\partial X^A}{\partial Y^B} = \det\frac{\partial x^a}{\partial y^b}{\left(\det\frac{\partial \eta^\alpha}{\partial \theta^\beta}\right)}^{-1} + \theta\dots.
	\end{equation}
	As in~\cite{L-ITS} it is sufficient to consider \(b=\eta^n\cdot\dots\cdot\eta^1 g(x) [\d X]\). Hence
	\begin{multline}
		b = \eta^n\cdot\dots\cdot\eta^1 g(x) [\d X] \\
		= \theta^n\cdot\dots\cdot\theta^1 \left(\det f^\alpha_\mu(y)\right) g(x(y)) \left(\det\frac{\partial x^a}{\partial y^b}\right){\left(\det\frac{\partial \eta^\alpha}{\partial \theta^\beta}\right)}^{-1} [\d Y] \\
		= \theta^n\cdot\dots\cdot\theta^1 \left(\det\frac{\partial x^a}{\partial y^b}\right) g(x(y)) [\d Y]
	\end{multline}
	as expected.
\end{proof}
 \section{Super Riemann surfaces}
\label{sec:SRS}

\begin{defn}[see~\cite{LBR-MSRS}]
	A super Riemann surface is a \(1|1\)-dimensional complex super manifold \(M\) with a \(0|1\)-dimensional distribution \(\cD\subset TM\) such that the commutator of vector fields induces an isomorphism
	\begin{equation}
		\frac12 [\cdot, \cdot]\colon \cD\otimes_\bbC\cD \to TM/\cD.
	\end{equation}
\end{defn}
\begin{ex}\label{ex:SRS}
	Let \((z,\theta)\) be the standard coordinates on \(\bbC^{1|1}\) and define \(\cD\subset T\bbC^{1|1}\) by \(\cD=\langle\partial_\theta + \theta\partial_z\rangle\). The isomorphism \(\cD\otimes\cD \simeq TM/\cD\) is explicitly given by
\begin{equation}
		[\partial_\theta + \theta\partial_z, \partial_\theta + \theta\partial_z] = 2\partial_z
	\end{equation}
	This example is generic since any super Riemann surface is locally of this form, see~\cite[Lemma 1.2]{LBR-MSRS}.
\end{ex}
\begin{thm}[\cite{GN-GSRS}]\label{thm:SRSReductionOfStructureGroup}
	A super Riemann surface is a \(2|2\)-dimensional real super manifold with a reduction of the structure group to
	\begin{equation}
		G = \left\{
			\begin{pmatrix}
				A^2 & B\\
				0 & A \\
			\end{pmatrix}
		\middle| A, B\in\bbC
		\right\} \subset \GL_\bbC(1|1) \subset \GL_\bbR(2|2)
	\end{equation}
	together with the following integrability conditions.
	Remember that \(\bbC\) is to be understood as the trivial family \(\bbC\times B\).
	Denote the \(G\)-frames by \(F_z\) and \(F_+\). Their decomposition in real and imaginary part yields frames \(F_a, F_\alpha\) for \(a=1,2, \alpha=3,4\) as follows:
	\begin{align}
		F_z &= \frac12\left(F_1 - \ic F_2\right) & F_+ &= \frac12\left(F_3 - \ic F_4\right) \\
		F_{\overline{z}} &= \overline{F_z} & F_- &= \overline{F_+}.
	\end{align}
	Let us denote the structure coefficients by \(t_{AB}^C\):
	\begin{equation}
		[F_A, F_B]=t_{AB}^C F_C
	\end{equation}
	Then the integrability conditions in terms of the complex frames are given by the following \(G\)-invariant equations:
	\begin{gather}
	\label{eq:FrameIntegrabilityConditions}
		t_{z+}^{\overline{z}} = t_{z+}^- = t_{++}^{\overline{z}} = t_{++}^- = t_{+-}^z = t_{+\overline{z}}^z = 0 \\
		t_{++}^z = 2
	\end{gather}
	The vanishing of the first four structure coefficients guarantees an integrable holomorphic structure, and the vanishing of the last two that \(\cD\) is a holomorphic distribution.
	Furthermore, \(t_{++}^z=2\) gives the complete non-integrability of \(\cD\).
\end{thm}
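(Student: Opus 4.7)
The plan is to prove the equivalence of both descriptions of a super Riemann surface. In one direction, I start from the definition given before the theorem and construct a $G$-structure with the stated integrability conditions; in the other, I start from the $G$-structure and integrability conditions and reconstruct the complex $1|1$-supermanifold together with its distribution.

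For the direction from a SRS to a $G$-structure, I regard the complex $1|1$-supermanifold $M$ as a real $2|2$-supermanifold equipped with the integrable almost complex structure $J$ of multiplication by $\ic$. Locally choose any generator $F_+$ of $\cD$; by the nondegeneracy of $\tfrac{1}{2}[\cdot,\cdot]\colon\cD\otimes_\bbC\cD\to TM/\cD$ in the definition, the vector $F_z := \tfrac{1}{2}[F_+, F_+]$ completes $F_+$ to a frame of the $+\ic$-eigenbundle of $J$ on $T_\bbC M$. Conjugating yields $F_{\overline{z}}, F_-$, and splitting into real and imaginary parts gives the real frame $(F_a, F_\alpha)$ as in the theorem. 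The condition $t_{++}^z = 2$ is simply the definition of $F_z$. The four vanishing conditions $t_{z+}^{\overline z} = t_{z+}^- = t_{++}^{\overline z} = t_{++}^- = 0$ are equivalent to the closedness of the $+\ic$-eigenbundle under the bracket, which holds by integrability of $J$. The remaining two vanishing conditions $t_{+-}^z = t_{+\overline{z}}^z = 0$ express exactly the holomorphicity of the subbundle $\cD$ inside the holomorphic tangent bundle, since under the Dolbeault operator the obstruction to $\cD$ being holomorphic is the $F_z$-component of $[Y, X]$ for $X\in\Gamma(\cD)$ and $Y$ in the $-\ic$-eigenbundle. Finally, any other local generator of $\cD$ is of the form $F_+' = A F_+$ with $A$ a nowhere-vanishing even function; applying the super Leibniz rule gives $F_z' = \tfrac{1}{2}[F_+', F_+'] = A^2 F_z + A F_+(A)\, F_+$, so the transition matrix has exactly the form in $G$.

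For the converse, a $G$-structure determines an almost complex structure $J$ via the inclusion $G\subset\GL_\bbC(1|1)\subset\GL_\bbR(2|2)$, and a complex rank $0|1$ subbundle $\cD = \langle F_+\rangle_\bbC\subset T_\bbC M$ which is well defined thanks to the upper-triangular shape of $G$. The first four integrability conditions make the $+\ic$-eigenbundle involutive, so a super version of the Newlander–Nirenberg theorem renders $J$ integrable and equips $M$ with a complex $1|1$-structure. The conditions $t_{+-}^z = t_{+\overline{z}}^z = 0$ then ensure that $\cD$ is preserved by the induced Dolbeault operator and is thus a holomorphic subbundle, while $t_{++}^z = 2$ yields $\tfrac{1}{2}[F_+, F_+] \equiv F_z \not\equiv 0 \pmod{\cD}$, giving the isomorphism $\cD\otimes_\bbC\cD\simeq TM/\cD$ required by the definition.

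The main obstacle is the invocation of the super Newlander–Nirenberg theorem in the converse direction and the careful translation between holomorphicity of a subbundle of the holomorphic tangent bundle and vanishing of the particular real structure coefficients $t_{+-}^z$ and $t_{+\overline{z}}^z$. The $G$-invariance of the listed integrability conditions and the identification of $G$ itself as the stabilizer of the structure are then short calculations that follow from the super Leibniz rule applied to the definition $F_z = \tfrac{1}{2}[F_+, F_+]$.
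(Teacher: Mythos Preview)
The paper does not supply its own proof of this theorem; it is quoted from Giddings--Nelson and used as a black box for the remainder of Section~\ref{sec:SRS}. Your argument is the standard one and is essentially correct.

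Two points are worth sharpening. First, in the forward direction your choice \(F_z := \tfrac{1}{2}[F_+,F_+]\) produces frames that additionally satisfy \(t_{++}^+ = 0\); these are particular representatives of the \(G\)-structure (indeed this is part of what the paper later isolates as the Wess--Zumino condition~\eqref{eq:WZFrameCond}). The transition matrices you compute between two such frames have the specific entry \(B = A\,F_+(A)\), which does lie in \(G\); the full \(G\)-structure is then the \(G\)-saturation of these frames, which is why arbitrary \(B\) is allowed in the group. Second, in the converse direction you correctly flag the super Newlander--Nirenberg theorem as the substantive input. This is a genuine theorem in the smooth category and not a formality, so in a complete write-up it should be cited rather than merely named. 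With it in hand, your reading of \(t_{+-}^z = t_{+\overline{z}}^z = 0\) as the vanishing of the \(T^{1,0}M/\cD\)-component of \([F_-,F_+]\) and \([F_{\overline{z}},F_+]\), hence as the statement that \(\overline{\partial}\) on \(T^{1,0}M\) preserves \(\cD\), is exactly right.
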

Theorem~\ref{thm:SRSReductionOfStructureGroup} leads to two observations.
First, since \(\OGL(2|2)\nsubseteq G\) it is not possible to describe the geometry of super Riemann surfaces in terms of super Riemannian metrics on \(M\).
Second, a further reduction to \(\UGL(1)\) is always possible via
\begin{equation}
	\begin{split}
		\UGL(1)&\to G\\
		U &\mapsto
			\begin{pmatrix}
				U^2 & 0 \\
				0 & U \\
			\end{pmatrix}.
	\end{split}
\end{equation}

Consider now such a \(\UGL(1)\)-structure on \(M\).
It induces a non-degenerate, super symmetric bilinear form \(m\) on \(TM\), given in the \(\UGL(1)\)-frames by
\begin{align}
\label{eq:U1Metric}
	m(F_a, F_b) &= \delta_{ab} & m(F_a, F_\beta) &= 0 & m(F_\alpha, F_\beta) &= \varepsilon_{\alpha\beta}
\end{align}
The projector on \(\cD\) gives a splitting of the following short exact sequence:
\begin{diag}\label{seq:SRSsplit}
	\matrix[seq](m) { 0 & \cD & TM=\cD^\perp\oplus \cD & TM/\cD & 0\\ };
	\path[pf]	(m-1-1) edge (m-1-2)
		(m-1-2) edge (m-1-3)
		(m-1-3) edge (m-1-4)
		(m-1-4) edge (m-1-5)
			edge[bend right=40] node[auto]{\(p\)} (m-1-3);
\end{diag}
The pullback of the short exact sequence~\eqref{seq:SRSsplit} along an embedding \(i\colon \Smooth{M}\to M\)
\begin{diag}\label{eq:SRS_pullback_seq}
	\matrix[seq](m) { 0 & S & i^*TM & T\Smooth{M} & 0\\};
	\path[pf]	(m-1-1) edge (m-1-2)
		(m-1-2) edge (m-1-3)
		(m-1-3) edge (m-1-4)
		(m-1-4) edge (m-1-5)
			edge[bend right=40] node[auto]{\(\tilde{p}\)} (m-1-3)
			edge[bend left=40] node[auto]{\(Ti\)} (m-1-3);
\end{diag}
possesses a second splitting given by \(Ti\).
\begin{defn}[Metric \(g\), spinor bundle \(S\) and gravitino \(\chi\)]
	By the identification \(T\Smooth{M}=i^*\cD^\perp\), the tangent bundle of \(\Smooth{M}\) gets equipped with a metric \(g\).

	The bundle \(S=i^*\cD\) is a spinor bundle of the metric \(g\) because \(i^*\cD\otimes_\bbC i^*\cD = i^*TM/\cD = T\Smooth{M}\).
	The identification \(S=i^*\cD\) induces a non degenerate bilinear form \(g_S\) on \(S\) that is given in the frames \(s_\alpha = i^*F_\alpha\) by
	\begin{equation}
		g_S(s_\alpha, s_\beta) = \varepsilon_{\alpha\beta}
	\end{equation}

	The difference of the splittings \(\tilde{p}\) and \(Ti\) is a section of \(T^\vee\Smooth{M}\otimes S\) which we call gravitino \(\chi\).
	\begin{equation}\label{eq:DefinitionGravitino}
		\chi(v) = p_S\left(\tilde{p} - Ti\right)v
	\end{equation}
	Here \(p_S\colon i^*TM\to S\) is the projector given by the splitting of the short exact sequence by \(\tilde{p}\).
\end{defn}
Keep in mind that the vector bundle \(S\) is of real dimension \(0|2\) and the frames \(s_\alpha\) are odd.
Also notice that in general the embedding \(i\colon \Smooth{M}\to M\) is not holomorphic with respect to the complex structure on \(\Smooth{M}\) induced by \(g\) (cf.\ the construction in the proof of the Theorem~\ref{thm:ExistenceUnderlyingManifold}).

Different choices of \(\UGL(1)\)-structure lead to metrics and gravitinos which differ from \(g\) and \(\chi\) only by a conformal and super Weyl transformation.
Every matrix of \(G\) can be decomposed as
\begin{equation}\label{eq:GMatrixDecomposition}
	\begin{pmatrix}
		A^2 & B \\
		0 & A
	\end{pmatrix}
	=
	\begin{pmatrix}
		U^2 & 0 \\
		0 & U
	\end{pmatrix}
	\begin{pmatrix}
		R^2 & 0 \\
		0 & R
	\end{pmatrix}
	\begin{pmatrix}
		1 & T \\
		0 & 1
	\end{pmatrix}
\end{equation}
where \(U\in \UGL(1)\), \(R\in \bbR^+\) and \(T\in \bbC\).
The first matrix preserves the \(\UGL(1)\)-structure on \(M\).
Consequently the bilinear forms \(m\), \(g\) and \(g_S\) are preserved.
The second matrix in the decomposition~\eqref{eq:GMatrixDecomposition} rescales the frames \(F_A\) and changes the \(\UGL(1)\) structure.
As a result the bilinear form \(g\) is rescaled by \(i^\#R^2\) and \(g_S\) is rescaled by \(i^\#R\).
The third matrix in the decomposition changes the splitting \(TM=\cD\oplus\cD^\perp\).
It is easy to see that the induced change on \(\chi\) is indeed a super Weyl transformation.
However only the functions \(i^\#U\), \(i^\#R\) and \(i^\#T\) do effect the metric \(g\) and the gravitino.
The higher order terms of \(R\) and \(T\) do leave \(g\) and \(\chi\) invariant.

Having constructed a metric and gravitino on a 2-dimensional surface \(\Smooth{M}\) from a super Riemann surface \(M\), we now consider the opposite question.
Given a \(2|0\)-dimensional manifold \(\Smooth{M}\) and a metric \(g\) and a gravitino \(\chi\) is there a unique super Riemann surface \(M\) with an embedding \(i\colon \Smooth{M}\to M\) such that the above construction gives the same metric and gravitino back?
In order to affirmatively answer the question, one has to take into account all geometrical degrees of freedom on \(M\) that are not fixed by the metric \(g\) and the gravitino \(\chi\) on \(\Smooth{M}\).
An example for such geometrical degree of freedom is given by the higher order terms in the decomposition~\eqref{eq:GMatrixDecomposition}.
\begin{defn}[Wess--Zumino frames]
	A \(G\)-frame \(F_A\) is called Wess--Zumino frame if the following commutator relations hold in addition to the integrability conditions~\eqref{eq:FrameIntegrabilityConditions}:
	\begin{align}
	\label{eq:WZFrameCond}
		i^\#t_{+-}^+ &= 0 & i^\#F_+t_{+-}^+ &= 0 & t_{++}^+ &= 0
	\end{align}
\end{defn}
\begin{lemma}\label{lemma:ExistenceWZFrames}
	Let \(F_A\) be a \(\UGL(1)\)-frame.
	Then there is a unique Wess--Zumino frame \(\tilde{F}_A\) in the same \(G\)-class such that \(i^*F_A = i^*\tilde{F}_A\).
\end{lemma}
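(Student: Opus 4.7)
The plan is to use the triangular decomposition in~\eqref{eq:GMatrixDecomposition} to parametrise every $G$-frame in the $G$-orbit of $F_A$ by a triple $(U, R, T) \in \UGL(1) \times \bbR^+ \times \bbC$ of even functions on $M$. The condition $i^*F_A = i^*\tilde{F}_A$ is equivalent to $i^\# U = 1$, $i^\# R = 1$ and $i^\# T = 0$, so the remaining freedom lies entirely in the nilpotent parts of $U$, $R$ and $T$. Since the odd fibre dimension of $M$ is $0|1$ over $\bbC$, every such function admits a short $\eta$-expansion, and the three Wess--Zumino conditions~\eqref{eq:WZFrameCond} become a finite system of equations for finitely many coefficient functions on $\Smooth{M}$.

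First I would derive the transformation laws for the three structure coefficients $t_{++}^+$, $t_{+-}^+$ and $F_+ t_{+-}^+$ under each of the three factors separately. The $\UGL(1)$-factor acts on the structure coefficients multiplicatively by weights and produces an inhomogeneous derivative contribution $(F_A U)$; the $\bbR^+$-factor does the analogous thing with $R$; and the unipotent factor shifts $F_+ \mapsto F_+ + T F_z$, which via $[F_+, F_+] = t_{++}^A F_A$ together with $t_{++}^z = 2$ from Theorem~\ref{thm:SRSReductionOfStructureGroup} produces a term proportional to $T$ in the new $t_{++}^+$. This last observation is the key source of non-degeneracy: the shift by $T$ enters $t_{++}^+$ with a nonzero leading coefficient.

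Next I would solve the system order by order in the odd fibre coordinates. Because $t_{++}^+ = 0$ is required as an identity on all of $M$, it decomposes into its body and higher $\eta$-components. The body component of this equation, combined with the fact that the shift by $T$ enters linearly with non-vanishing leading coefficient, determines the $\eta^1\eta^2$-coefficient of $T$ (which is the only free component, as $i^\# T = 0$). The higher-order components of $t_{++}^+ = 0$ fix corresponding components of $F_+ U$ and $F_+ R$, hence of $U$ and $R$. The body conditions $i^\# t_{+-}^+ = 0$ and $i^\# F_+ t_{+-}^+ = 0$ then fix the last remaining coefficients of $U$ and $R$, again non-degenerately thanks to $t_{++}^z = 2$ and the Jacobi identity relating $[F_+,[F_+,F_-]]$ to $[F_+,F_+]$.

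The main obstacle is the bookkeeping: one must arrange these equations in an order in which they are triangular with respect to the free parameters in $(U, R, T)$, and verify that the coefficient of the top-level parameter in each equation is invertible. Once triangularity and invertibility are established, existence and uniqueness of $(U, R, T)$ follow simultaneously. The construction is intrinsically local on $M$, but uniqueness at every point guarantees that the locally constructed transformations coincide on overlaps, so the resulting Wess--Zumino frame $\tilde{F}_A$ is globally well defined.
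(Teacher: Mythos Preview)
Your strategy is exactly the paper's: apply \(h\in G\) with \(i^*h=\id\), parametrise \(h\) via~\eqref{eq:GMatrixDecomposition}, and argue that the conditions~\eqref{eq:WZFrameCond} pin down all higher-order coefficients of \((U,R,T)\). The paper's proof says no more than that and leaves the bookkeeping implicit.

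Two points in your setup are wrong, however, and would derail the bookkeeping you sketch. First, \(T\) (equivalently the entry \(B\)) is \emph{odd}, not even: it sits in the off-diagonal block of an element of \(\GL_\bbC(1|1)\). Hence the free components of \(T\) subject to \(i^\#T=0\) are not ``only the \(\eta^1\eta^2\)-coefficient''; the \(\eta^\mu\)-linear coefficients (which are even) are present as well. Second, the unipotent factor does not shift \(F_+\mapsto F_+ + TF_z\). The distribution \(\cD=\langle F_+\rangle\) is part of the super Riemann surface structure and is preserved by every \(G\)-transformation, so \(\tilde F_+\) must remain in \(\cD\); rather the third matrix in~\eqref{eq:GMatrixDecomposition} shifts the complement, \(F_z\mapsto F_z+TF_+\), which is exactly the paper's remark that it changes the splitting \(TM=\cD\oplus\cD^\perp\). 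Your non-degeneracy claim does survive: writing \(\tilde F_+=AF_+\) and \(\tilde F_z=A^2F_z+A^2TF_+\), one finds \(\tilde t_{++}^+ = At_{++}^+ + 2F_+A - 2AT\), so \(\tilde t_{++}^+=0\) still determines \(T\) with invertible leading coefficient. But the mechanism is not the one you describe, and the order-by-order assignment of equations to parameters would have to be redone from the corrected transformation law.
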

\begin{proof}
	Apply a transformation \(h\in G\) to \(F_A\) such that \(i^*h=\id\).
	Then the conditions~\eqref{eq:WZFrameCond} fix the higher order terms of \(U\), \(R\) and \(T\) from~\eqref{eq:GMatrixDecomposition}.
\end{proof}
\begin{defn}[Wess--Zumino coordinates]
	The coordinates \(X^A = (x^a, \eta^\alpha)\) are called Wess--Zumino coordinates of the frame \(F_A\) if \(i^\#\eta^\alpha = 0\) and the coordinate expression of the frame \(F_\alpha\) is given by
	\begin{equation}
	\label{eq:WZCoordinateConditions}
		F_\alpha = \left(\eta^\mu \tensor{F}{_\mu_\alpha^b}(x) + \eta^2\eta^1\dots \right)\partial_{x^b} + \left(\delta_\alpha^\beta + \eta^\mu\tensor{F}{_\mu_\alpha^\beta} + \eta^2\eta^1\dots \right)\partial_{\eta^\beta}.
	\end{equation}
	Here the degree one coefficients are symmetric with respect to the lower indices, i.e.
	\begin{align}
		\varepsilon^{\mu\alpha} \tensor{F}{_\mu_\alpha^C} = 0.
	\end{align}
\end{defn}
\begin{lemma}\label{lemma:ExistenceWZCoordinates}
	Given a \(G\)-frame \(F_A\) and coordinates \(\tilde{X}^A = (\tilde{x}^a, \tilde{\eta}^\alpha)\) there are unique Wess--Zumino coordinates \(X^A = (x^a, \eta^\alpha)\) for \(F_A\) such that \(i^\#\tilde{x}^a = i^\#x^a\).
\end{lemma}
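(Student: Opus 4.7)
The plan is to reach Wess--Zumino form by a sequence of coordinate changes: first normalise $i^\#\eta^\alpha=0$, then solve the frame conditions order by order in $\eta$.

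Starting from $\tilde{X}^A$, I would use Example~\ref{ex:UnderlyingEvenManifoldRelRmn}: choose odd lifts $\xi^\alpha\in\cO_M$ of $i^\#\tilde{\eta}^\alpha$ and pass to intermediate coordinates $\hat{x}^a=\tilde{x}^a$, $\hat{\eta}^\alpha=\tilde{\eta}^\alpha-\xi^\alpha$, so that $i^\#\hat{x}^a=i^\#\tilde{x}^a$ and $i^\#\hat{\eta}^\alpha=0$.

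By Example~\ref{ex:UnderlyingEvenManifoldRelRmn}, any further coordinate change $\hat{X}\to X$ preserving these two properties has the form
\begin{align}
	\hat{x}^b &= x^b + \eta^\mu h_\mu^b(x,\eta), &
	\hat{\eta}^\beta &= \eta^\mu h_\mu^\beta(x,\eta),
\end{align}
with $h_\mu^b$ odd and $h_\mu^\beta$ even. Because $\eta^\mu\eta^2\eta^1=0$ in dimension $2|2$, only the leading parts $a_\mu^b=h_\mu^b|_{\eta=0}$ and $d_\mu^\beta=h_\mu^\beta|_{\eta=0}$, together with the antisymmetric-in-lower-indices components of the linear-in-$\eta$ parts of $h_\mu^b$ and $h_\mu^\beta$, actually contribute. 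A direct count gives $6$ odd and $6$ even effective functions of $x$, exactly matching the Wess--Zumino conditions: $4$ odd from $(F_\alpha^b)_0=0$, $4$ even from $(F_\alpha^\beta)_0=\delta_\alpha^\beta$, and $2+2$ from $\varepsilon^{\mu\alpha}F_{\mu\alpha}^C=0$.

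I would then expand $F_\alpha$ in the new coordinates using the Jacobian and impose the conditions order by order in $\eta$. At order zero the equations $(F_\alpha^b)_0=0$ and $(F_\alpha^\beta)_0=\delta_\alpha^\beta$ give a linear system for $a_\mu^b$ and $d_\mu^\beta$; unique solvability follows from invertibility of $i^*F_\alpha$ as a basis of $i^*\cD$. With these fixed, at linear order in $\eta$ the symmetry conditions $\varepsilon^{\mu\alpha}F_{\mu\alpha}^C=0$ become a second linear system for the remaining antisymmetric parameters, again uniquely solvable by the same invertibility. There are no further conditions at order $\eta^2\eta^1$, and by the parameter count no further freedom either.

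The main obstacle is bookkeeping: the Jacobian $\partial\hat{X}/\partial X$ mixes orders in $\eta$, so one must carefully extract, at each order, the precise linear dependence of the new frame components $F_\alpha^{X,C}$ on the transformation parameters. The coincidence of effective parameter count with condition count makes the problem well posed; non-degeneracy of each linear system then follows by routine verification from the $G$-frame structure.
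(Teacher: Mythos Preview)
The paper does not actually supply a proof of this lemma; it is stated and then the text moves on, in keeping with the paper's announced policy of deferring the longer computations to the thesis~\cite{EK-DR}. So there is nothing to compare against directly.

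Your argument is correct and is exactly in the spirit of the terse proofs of the neighbouring Lemmas~\ref{lemma:ExistenceWZFrames} and~\ref{lemma:WZPair}: normalise, count, and observe that the conditions fix the remaining freedom. The normalisation via Example~\ref{ex:UnderlyingEvenManifoldRelRmn}, the parametrisation of the residual coordinate changes, and the \(6|6\) parameter--condition count are all right. One small sharpening: at order zero the system is not merely ``solvable by invertibility of \(i^*F_\alpha\)'' --- it is in fact explicit. Writing \(F_\alpha|_{\hat\eta=0}=\hat F^b_{0\alpha}\partial_{\hat x^b}+\hat F^\beta_{0\alpha}\partial_{\hat\eta^\beta}\) and comparing with \(\partial_{\eta^\alpha}|_{\eta=0}=a_\alpha^b\partial_{\hat x^b}+d_\alpha^\beta\partial_{\hat\eta^\beta}\), the Wess--Zumino condition \((F_\alpha)|_0=\partial_{\eta^\alpha}\) forces \(a_\alpha^b=\hat F^b_{0\alpha}\) and \(d_\alpha^\beta=\hat F^\beta_{0\alpha}\) outright. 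The matrix \(d\) is then automatically invertible because the odd--odd block of any \(G\)-frame in coordinates with \(i^\#\hat\eta=0\) is invertible (this is what ``\(F_A\) is a frame'' means at the reduced level). With \(d\) invertible, the linear-in-\(\eta\) step is a genuine linear system in the two remaining antisymmetric parameters for each value of \(C\), with coefficient matrix built from \(d\); non-degeneracy is then immediate. So your ``routine verification'' really is routine.
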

The notions of “Wess--Zumino frames” and “Wess--Zumino coordinates” are derived from the notion of “Wess--Zumino gauge” as used in~\cite{dHP-GSP}.
They have at least two purposes.
The first one is that they reduce the freedom in the local description of super Riemann surfaces.
Instead of all super coordinate systems and all \(G\)-frames, we now only need to consider the Wess--Zumino frames and Wess--Zumino coordinates.
As was shown in Lemma~\ref{lemma:ExistenceWZFrames} and Lemma~\ref{lemma:ExistenceWZCoordinates} they are unique up to a choice of \(i^\#x^a\) and \(i^*F_\alpha\).
Second they relate the odd coordinates on \(M\) to spinors on \(\Smooth{M}\), as the frames \(s_\alpha = i^*F_\alpha = i^*\partial_{\eta^\alpha}\) are frames for \(S\).

Let now \(F_A\) be a \(\UGL(1)\)-frame on \(U\subset M\).
Consider the coordinate expansion in Wess--Zumino coordinates \(X^A = (x^a, \eta^\alpha)\) for \(F_A\).
\begin{equation}\label{eq:FrameExpansion}
	\begin{split}
		F_a &= \left(\tensor{F}{_0_a^b} + \eta^\mu\tensor{F}{_\mu_a^b} + \eta^2\eta^1\tensor{F}{_{21}_a^b}\right)\partial_{x^b} + \left(\tensor{F}{_0_a^\beta} + \eta^\mu\tensor{F}{_\mu_a^\beta} + \eta^2\eta^1\tensor{F}{_{21}_a^\beta}\right) \partial_{\eta^\beta} \\
		F_\alpha &= \left(\eta^\mu\tensor{F}{_\mu_\alpha^b} + \eta^2\eta^1\tensor{F}{_{21}_\alpha^b}\right)\partial_{x^b} + \left( \delta_\alpha^\beta + \eta^\mu\tensor{F}{_\mu_\alpha^\beta} + \eta^2\eta^1\tensor{F}{_{21}_\alpha^\beta}\right) \partial_{\eta^\beta}
	\end{split}
\end{equation}
The frames \(s_\alpha = i^*F_\alpha\) are \(\UGL(1)\)-frames for \(S\).
Furthermore the frame \(F_a\) can be expanded
\begin{equation}
	i^*F_a = \tensor{F}{_0_a^b}i^*\partial_{x^b} + \tensor{F}{_0_a^\beta}s_\beta
\end{equation}
Then by formula~\eqref{eq:DefinitionGravitino} we know that \(f_a = \tensor{F}{_0_a^b}\partial_{y^b}\) is a \(g\)-orthonormal frame and the gravitino is given by
\begin{equation}
	\chi(f_a) = \tensor{F}{_0_a^\beta}s_\beta
\end{equation}
To complete the local description of super Riemann surfaces in terms of metrics and gravitinos we still need the following lemma:

\begin{lemma}\label{lemma:WZPair}
	Let \(F_A\) be a Wess--Zumino frame and \(X^A = (x^a, \eta^\alpha)\) Wess--Zumino coordinates for \(F_A\).
	All higher order coefficients in~\eqref{eq:FrameExpansion} can be expressed in terms of \(\tensor{F}{_0_a^b}\) and \(\tensor{F}{_0_a^\beta}\) and thus in terms of \(f_a\) and \(\chi\).
\end{lemma}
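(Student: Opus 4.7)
The proof reduces to solving, order by order in $\eta$, the system of equations that the integrability conditions from Theorem~\ref{thm:SRSReductionOfStructureGroup} together with the Wess--Zumino frame conditions~\eqref{eq:WZFrameCond} impose on the Wess--Zumino coordinate expansion~\eqref{eq:FrameExpansion}. The plan is to substitute the complex frames $F_z = \tfrac12(F_1 - \ic F_2)$ and $F_\pm = \tfrac12(F_3 \mp \ic F_4)$ using~\eqref{eq:FrameExpansion} into the commutators $[F_A, F_B]$, extract the structure coefficients $t_{AB}^C$, and decompose each of the conditions $t_{z+}^{\bar z} = t_{z+}^- = t_{++}^{\bar z} = t_{++}^- = t_{+-}^z = t_{+\bar z}^z = 0$, $t_{++}^z = 2$, $i^\# t_{+-}^+ = 0$, $i^\# F_+ t_{+-}^+ = 0$ and $t_{++}^+ = 0$ into its $\eta^0$, $\eta^\mu$ and $\eta^2\eta^1$ parts.

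At order $\eta^0$ the conditions are automatically satisfied: they constrain only the zeroth-order data, which by construction defines a $g$-orthonormal frame $f_a$ and the spinor-valued one-form $\chi$. At order $\eta^\mu$, the commutator $[F_+,F_+]$ together with the symmetry requirement in~\eqref{eq:WZCoordinateConditions} yields an invertible algebraic system that determines the first-order coefficients $\tensor{F}{_\mu_\alpha^b}$ and $\tensor{F}{_\mu_\alpha^\beta}$ of $F_\alpha$ as polynomials in $\tensor{F}{_0_a^b}$, $\tensor{F}{_0_a^\beta}$ and their $x$-derivatives. Substituting these back, the first-order parts of the integrability conditions involving $[F_+,F_z]$, $[F_+,F_{\bar z}]$ and of $i^\# t_{+-}^+ = 0$ form a further triangular linear system that fixes the first-order coefficients $\tensor{F}{_\mu_a^b}, \tensor{F}{_\mu_a^\beta}$ of $F_a$. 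At order $\eta^2\eta^1$ the remaining top components of the same integrability conditions, together with $i^\# F_+ t_{+-}^+ = 0$ and the top part of $t_{++}^+ = 0$, determine the coefficients $\tensor{F}{_{21}_A^C}$ as differential polynomials in the already-known data $\tensor{F}{_0_a^b}, \tensor{F}{_0_a^\beta}$, and hence in $f_a$ and $\chi$.

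The main obstacle is the bookkeeping: the structure coefficients at order $\eta^k$ involve $x$-derivatives of the frame coefficients at order $\eta^{k-1}$, and one has to carefully translate between the real frame $(F_a,F_\alpha)$ used in~\eqref{eq:FrameExpansion} and the complex frame $(F_z,F_\pm)$ in which the integrability conditions are formulated, keeping track of which complex relations must be separated into real and imaginary parts. Unique solvability at each order is guaranteed a priori by the uniqueness parts of Lemma~\ref{lemma:ExistenceWZFrames} and Lemma~\ref{lemma:ExistenceWZCoordinates}: the Wess--Zumino frame and the Wess--Zumino coordinates are fixed once $i^* F_\alpha$ and $i^\# x^a$ are specified, so no freedom is left in the higher-order coefficients, and the matching of unknowns to equations at each order must therefore come out correctly. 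Since the computation itself is long but routine, I would carry out only the order $\eta^\mu$ block explicitly as a sanity check and refer to~\cite{EK-DR} for the top-order details, in line with the paper's stated policy.
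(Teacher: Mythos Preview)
Your proposal is correct and follows exactly the approach of the paper, which merely states in one sentence that the equations~\eqref{eq:FrameIntegrabilityConditions},~\eqref{eq:WZCoordinateConditions} and~\eqref{eq:WZFrameCond} are solvable for the unknown coefficient functions. Your order-by-order elaboration is a faithful unpacking of that claim; one small imprecision is that the $\eta^0$ part of $[F_+,F_+]$ already involves the first-order coefficients $\tensor{F}{_\mu_\alpha^b}$ (since $i^*F_\alpha = \partial_{\eta^\alpha}$ differentiates the $\eta$-expansion), so the first nontrivial constraints appear one step earlier than you indicate, but this does not affect the solvability argument.
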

\begin{proof}
	The equations~\eqref{eq:FrameIntegrabilityConditions},~\eqref{eq:WZCoordinateConditions} and~\eqref{eq:WZFrameCond} are solvable for the unknown coefficient functions.
\end{proof}

\begin{thm}
	Given a super manifold \(\Smooth{M}\) over \(B\) together with a metric \(g\), a spinor bundle \(S\) and a gravitino field \(\chi\). Then there is a unique super Riemann surface \(M\) over \(B\) together with an inclusion \(i\colon \Smooth{M}\to M\) such that the above procedure gives back the gravitino and metric up to conformal transformation of \(g\) and super Weyl transformation of the gravitino \(\chi\).
\end{thm}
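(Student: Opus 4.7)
The plan is to build $M$ locally using Wess--Zumino coordinates, and then glue the local pieces using the uniqueness statements of Lemma~\ref{lemma:ExistenceWZFrames} and Lemma~\ref{lemma:ExistenceWZCoordinates}.

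First I would cover $\Smooth{M}$ by countably many open sets $\Smooth{V_i}$ which trivialise both the $\UGL(1)$-structure determined by $g$ and the spinor bundle $S$. On each $\Smooth{V_i}$ choose a $g$-orthonormal frame $f^{(i)}_a$ and a frame $s^{(i)}_\alpha$ of $S$. For each $i$, build the trivial super manifold $V_i = \Smooth{V_i}\times \bbR^{0|2}$ with even coordinates $x^a$ pulled back from $\Smooth{V_i}$ and odd coordinates $\eta^\alpha$, and declare $i|_{\Smooth{V_i}}$ to be the inclusion $\eta^\alpha \mapsto 0$. Lemma~\ref{lemma:WZPair} tells us exactly how to reconstruct a Wess--Zumino frame $F^{(i)}_A$ out of the prescribed data $\tensor{F}{_0_a^b}=f^{(i){}b}_a$ and $\tensor{F}{_0_a^\beta}=\chi(f^{(i)}_a)^\beta$: the integrability conditions~\eqref{eq:FrameIntegrabilityConditions}, the Wess--Zumino coordinate conditions~\eqref{eq:WZCoordinateConditions} and~\eqref{eq:WZFrameCond} determine all higher order coefficients uniquely. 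This endows each $V_i$ with a $G$-structure satisfying~\eqref{eq:FrameIntegrabilityConditions}, and hence with the structure of a super Riemann surface whose distribution $\cD$ is spanned by $F^{(i)}_+$.

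Next I would glue. On an overlap $\Smooth{V_i}\cap\Smooth{V_j}$ the change of $g$-orthonormal frame and of spinor frame lies in $\UGL(1)\times\UGL(1)$ and induces a unique transition morphism $V_i \supset V_{ij}\to V_{ji}\subset V_j$ as follows: the prescribed data $(f^{(j)}_a,\chi)$ together with the coordinates $i^\#x^a_j$ pulled over from $V_i$ uniquely determine Wess--Zumino coordinates and a Wess--Zumino frame on $V_{ij}$ by Lemmas~\ref{lemma:ExistenceWZCoordinates} and~\ref{lemma:ExistenceWZFrames}; comparing with the ones coming from $V_j$ produces the transition map. The cocycle condition on triple overlaps follows from the same uniqueness: both sides of $\varphi_{ik}=\varphi_{jk}\circ\varphi_{ij}$ are Wess--Zumino-compatible super extensions of the (already consistent) transition on $\Smooth{M}$, and therefore coincide. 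The resulting super manifold $M$ carries a global $G$-structure which satisfies~\eqref{eq:FrameIntegrabilityConditions}, hence is a super Riemann surface, and by construction the reduction procedure of Section~\ref{sec:SRS} recovers $g$ and $\chi$.

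For uniqueness, suppose $M'$ with embedding $i'\colon\Smooth{M}\to M'$ produces the same $g$ and $\chi$ up to conformal and super Weyl transformation. By the remarks following~\eqref{eq:GMatrixDecomposition} one may pass to a $\UGL(1)$-structure on $M'$ giving the same $g$ and $\chi$ on the nose. Around every point choose Wess--Zumino coordinates and a Wess--Zumino frame on $M'$; Lemma~\ref{lemma:WZPair} shows that the full coordinate expansion of the frame is determined by $f_a$ and $\chi$, so the transition data of $M'$ agree with those of the $M$ built above, yielding a canonical isomorphism $M\to M'$ extending $\id_{\Smooth{M}}$.

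The main obstacle I expect is the verification of the cocycle condition for the transition morphisms on triple overlaps, because one has to check that the Wess--Zumino procedure is genuinely functorial with respect to the even gluing. The resolution is entirely through the uniqueness clauses of Lemmas~\ref{lemma:ExistenceWZFrames}, \ref{lemma:ExistenceWZCoordinates} and~\ref{lemma:WZPair}: both compositions solve the same system of integrability and gauge equations with the same prescribed leading data, and therefore must coincide. A secondary technical point is the contractibility argument, analogous to that in Theorem~\ref{thm:ExistenceUnderlyingManifold}, needed to smoothly interpolate odd transition functions where a global section is not canonically available; but once expressed in Wess--Zumino gauge, this interpolation is enforced by the formulas.
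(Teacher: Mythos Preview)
Your proposal is correct and follows essentially the same route as the paper: build local super Riemann surface patches via Wess--Zumino frames determined by Lemma~\ref{lemma:WZPair}, and glue using the uniqueness clauses of Lemmas~\ref{lemma:ExistenceWZFrames} and~\ref{lemma:ExistenceWZCoordinates}. The only refinement the paper adds is to couple the local frame choices by $s_+\otimes_\bbC s_+ = f_z$, which enforces the normalisation $t_{++}^z=2$ and collapses your $\UGL(1)\times\UGL(1)$ transition ambiguity on overlaps to a single $\UGL(1)$; with that adjustment your gluing and uniqueness arguments go through exactly as stated.
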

\begin{proof}
	Cover \(\Smooth{M}\) by open coordinate sets \((V, y^a)\).
	Choose local \(\UGL(1)\)-frames \(s_\alpha\) of \(S\) and \(f_a\) of \(T\Smooth{M}\) such that \(s_+\otimes_\bbC s_+\mapsto f_z\).
	Construct over the topological space \(V\) the super manifold \((V,\cO_V)\) by setting \(\cO_V=\Lambda(\Gamma_V(S^\vee))\) with coordinates \(x^a=y^a\) and \(\eta^\alpha = s^\alpha\), where \(s^\alpha\) is the canonical dual basis to \(s_\alpha\).
	Denote by \(F_A\) the Wess--Zumino frame constructed from the coefficients of the frame \(f_a\) and the gravitino \(\chi\) according to Lemma~\ref{lemma:WZPair}.
	This gives an integrable \(G\)-reduction of the structure group of \(TV\).
	The map \(i\) is locally constructed via its action on the coordinates \(x^a, \eta^\alpha\).

	It remains to glue different local constructions in order to obtain a well defined super Riemann surface over the same topological space \(\Top{M}\).
	The Wess--Zumino frames over different trivializing covers may differ by a \(\UGL(1)\)-transformation.
	The Wess--Zumino coordinates of Wess--Zumino frames that differ by a \(\UGL(1)\)-transformation are completely fixed by this \(\UGL(1)\)-transformation.
	Once more, details may be found in~\cite{EK-DR}.
\end{proof}

We have shown a one to one correspondence
\begin{equation}
	\left\{i\colon \Smooth{M}\to M, M \text{ super Riemann surface}\right\} \longleftrightarrow \left\{ \Smooth{M}, S, g, \chi \right\} / \text{Weyl, SWeyl}
\end{equation}
An advantage of this description is that on the right-hand side there are no integrability conditions to be fulfilled.
On the left hand side the integrability conditions~\eqref{eq:FrameIntegrabilityConditions} have to be fulfilled.
The presence of the integrability conditions complicates the study of deformations as one needs to assure the integrability of the deformations.

To obtain a super Teichmüller description of the moduli space of super Riemann surfaces one may look for a one to one correspondence (see e.g.~\cite{JJ-GP} and references therein)
\begin{multline}
	\left\{M, M \text{ super Riemann surface}\right\}/\SDiff(M) \\
	\longleftrightarrow \\
	\left\{ \Smooth{M}, S, g, \chi\right\} / \text{Weyl, SWeyl,} \Diff(\Smooth{M}), \SUSY
\end{multline}
The super symmetry transformations \(\SUSY\) on the right hand side can probably be identified with the change of embedding \(i\).
A precise definition of \(\SUSY\) and the study of the full quotient must be left for further research.
As a first step we treat the infinitesimal case.
As a preparation we first study the infinitesimal change of embedding.
\begin{prop}
	Let \(i_t\colon \Smooth{M}_t\to M\) be a smooth family of embeddings. The infinitesimal deformation
	\begin{equation}
		\left.\frac{\d}{\d t}\right|_{t=0} i_t \in \Gamma_{\Smooth{M}_0}(i_0^*TM)
	\end{equation}
	is a section \(q\in\Gamma_{\Smooth{M}_0}(i_0^*\cD)\) and the derivatives of the families of local frames \({f(t)}_a\) and \({\chi(t)}_a\) are given by
	\begin{equation}\label{eq:MetricGravitinoSUSY}
		\begin{split}
			{\left.\frac{\d}{\d t}\right|}_{t=0} {f(t)}_a &= -2\langle\gamma^b q, \chi(f_a)\rangle f_b \\
			{\left.\frac{\d}{\d t}\right|}_{t=0} {\chi(t)}_a &= \nabla_{f_a}^S q = \nabla^{LC}_{f_a} q + \langle \gamma^b\chi_b, \chi_a\rangle \gamma^1\gamma^2q
		\end{split}
	\end{equation}
	Here \(\nabla^{LC}\) is the Levi-Civita connection lifted to \(S\).
\end{prop}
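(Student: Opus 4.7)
My plan is to first gauge-fix so that $q$ lies in the spinor direction $\cD$, then compute the variations of $f_a$ and $\chi_a$ directly in Wess--Zumino coordinates.

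To begin, I would reduce to the case $q \in \Gamma(i_0^*\cD)$. Via the splitting $i_0^*TM = T\Smooth{M}_0 \oplus S$ induced by $Ti_0$ and $\tilde{p}$, decompose $\left.\frac{\d}{\d t}\right|_{t=0} i_t = v + q^{\cD}$ with $v$ an ordinary vector field on $\Smooth{M}_0$ and $q^{\cD} \in \Gamma(S)$. The component $v$ is absorbed by the inherent freedom in trivializing the family of domains $\Smooth{M}_t$: precomposing $i_t$ with the flow of $-v$ replaces $v$ by $0$ while only reparametrizing the induced data $(g,\chi)$ by a family in $\Diff(\Smooth{M}_0)$. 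Thus we may assume $q = q^{\cD} \in \Gamma(S)$ from the outset.

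Next I would pick, via Lemma~\ref{lemma:ExistenceWZFrames} and Lemma~\ref{lemma:ExistenceWZCoordinates}, a Wess--Zumino $\UGL(1)$-frame $F_A$ and adapted Wess--Zumino coordinates $(x^a,\eta^\alpha)$ for $i_0$. Then $i_0^\# x^a = y^a$, $i_0^\# \eta^\alpha = 0$, and $s_\alpha = i_0^*\partial_{\eta^\alpha}$. Writing $q = q^\alpha s_\alpha$, the family reads locally $i_t^\# x^a = y^a$, $i_t^\# \eta^\alpha = t q^\alpha + O(t^2)$. Differentiating the pullback of $F_a$ using the expansion~\eqref{eq:FrameExpansion} yields
\begin{equation}
\left.\frac{\d}{\d t}\right|_{t=0} i_t^* F_a = q^\mu \tensor{F}{_\mu_a^b}\, i_0^*\partial_{x^b} + q^\mu \tensor{F}{_\mu_a^\beta}\, s_\beta.
\end{equation}
The $\cD^\perp$-component (identified with $T\Smooth{M}_0$ via $\tilde{p}$) contributes $\left.\frac{\d}{\d t}\right|_{t=0} f(t)_a$, while the $\cD$-component together with the first-order change of $Ti_t$ (which generates a derivative of $q$ along $f_a$) contributes $\left.\frac{\d}{\d t}\right|_{t=0} \chi(t)_a$.

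Finally I would use the integrability conditions~\eqref{eq:FrameIntegrabilityConditions} together with the Wess--Zumino conditions~\eqref{eq:WZFrameCond} and~\eqref{eq:WZCoordinateConditions} to express $\tensor{F}{_\mu_a^b}$ and $\tensor{F}{_\mu_a^\beta}$ in terms of $\tensor{F}{_0_a^b}$ and $\tensor{F}{_0_a^\beta}$, i.e.\ in terms of $f_a$ and $\chi(f_a)$. This is exactly the content of Lemma~\ref{lemma:WZPair}, but now needed in explicit form. After substitution, $q^\mu \tensor{F}{_\mu_a^b}$ should become $-2\langle \gamma^b q, \chi(f_a)\rangle$, while $q^\mu \tensor{F}{_\mu_a^\beta}$ combined with the $Ti_t$-correction should assemble into $\nabla^{LC}_{f_a} q + \langle \gamma^b\chi_b, \chi_a\rangle \gamma^1\gamma^2 q = \nabla^S_{f_a}q$. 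The main obstacle is this last step: solving the combined integrability and Wess--Zumino equations explicitly, and carrying out the $\gamma$-matrix manipulations that identify the outcome with the Levi-Civita connection on $S$ plus the quartic gravitino correction. These are precisely the involved bookkeeping computations deferred to~\cite{EK-DR}.
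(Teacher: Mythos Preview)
The paper does not actually prove this proposition: it is stated without a \texttt{proof} environment and followed only by the remark that the transformations~\eqref{eq:MetricGravitinoSUSY} match the physics literature. The detailed computation is among those the introduction explicitly defers to the thesis~\cite{EK-DR}. So there is no in-paper argument to compare against.

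That said, your outline is precisely the route the paper's machinery is set up for: pass to a Wess--Zumino frame and Wess--Zumino coordinates adapted to $i_0$ (Lemmas~\ref{lemma:ExistenceWZFrames} and~\ref{lemma:ExistenceWZCoordinates}), differentiate $i_t^*F_a$ using the expansion~\eqref{eq:FrameExpansion}, and then feed in the explicit solution of Lemma~\ref{lemma:WZPair} for the first-order coefficients $\tensor{F}{_\mu_a^B}$ in terms of $f_a$ and $\chi$. You also correctly flag that the gravitino variation requires tracking the change in $Ti_t$, not just in $i_t^*F_a$, since $\chi$ is defined via the difference $\tilde p - Ti$.

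One small remark on your first step. You obtain $q\in\Gamma(i_0^*\cD)$ by absorbing a tangential component into a diffeomorphism of the domain. That is fine, but the proposition phrases this as an intrinsic fact rather than a gauge choice. The cleaner reading, in line with Example~\ref{ex:UnderlyingEvenManifoldRelRmn}, is that after choosing coordinates on $\Smooth{M}_t$ with $i_t^\# x^a = y^a$ (which is how the paper normalizes every underlying-even embedding), only $i_t^\#\eta^\alpha$ varies with $t$; since $i_0^*\partial_{\eta^\alpha}=s_\alpha$ spans $S=i_0^*\cD$, the derivative lands in $i_0^*\cD$ automatically. Your diffeomorphism argument is the same normalization said differently.
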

This proposition also justifies that the field \(\chi\) defined above was called gravitino, because the transformations~\eqref{eq:MetricGravitinoSUSY} are the expected super symmetries. Compare~\cites{dHP-GSP}{JJ-GP}{BdVH-LSRIASS}.

\begin{lemma}
	The gravitino can be gauged to zero locally.
	More precisely for every point \(m\in\Top{M}\) there exists an open neighbourhood \(U\subseteq M\) such that there is a \(\UGL(1)\)-structure and an embedding \(i\colon |M|\to M\) such that \(\left.\chi\right|_{i^{-1}(U)} = 0\).
	The gravitino can be gauged away globally, if \(M\) is a trivial family of super Riemann surfaces.
\end{lemma}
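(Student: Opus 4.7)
My plan is to use~\eqref{eq:DefinitionGravitino} to reformulate the vanishing of \(\chi\) as the coincidence of the two splittings \(\tilde p\) and \(Ti\) of~\eqref{eq:SRS_pullback_seq}, equivalently \(Ti(T\Smooth{M})\subseteq i^*\cD^\perp\); in any Wess--Zumino frame this is just the vanishing of every coefficient \(\tensor{F}{_0_a^\beta}\) in~\eqref{eq:FrameExpansion}. The task is then to exhibit a \(\UGL(1)\)-reduction together with an embedding realizing this vanishing.

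For the local statement I would apply Example~\ref{ex:SRS} to bring \(M\) into the standard form \((\bbC^{1|1},\cD=\langle D\rangle)\), \(D=\partial_\theta+\theta\partial_z\), on a neighborhood \(U\) of \(m\). I then choose the natural \(\UGL(1)\)-frame \(F_+=D\), \(F_z=\tfrac{1}{2}[F_+,F_+]=\partial_z\), whose real parts \(F_a\) are just \(\partial_{x^a}\) and therefore contain no \(\partial_{\eta^\beta}\)-component, together with the embedding \(i_U^\#\theta=0\). The vanishing \(\tensor{F}{_0_a^\beta}=0\) on \(U\) is then immediate, hence \(\chi|_{i^{-1}(U)}=0\). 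Finally I would extend \(i_U\) to a global embedding via Theorem~\ref{thm:ExistenceUnderlyingManifold} and pick any global \(\UGL(1)\)-reduction of the \(G\)-frame bundle, which exists because the fiber \(G/\UGL(1)\simeq\bbR^+\times\bbC^{0|1}\) is contractible.

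For the global statement I would specialize to a trivial family \(M=M_0\times B\) with \(M_0\) a super Riemann surface over \(\bbR^{0|0}\). Over \(\bbR^{0|0}\) the underlying even manifold \(\Smooth{M_0}=(\Top{M_0},\cO_{M_0}/\cI)\) is unique, with canonical embedding \(i_0\) characterized by \(i_0^\#\eta^\alpha=0\) in every chart. Setting \(i=i_0\times\id_B\) and pulling back any chosen \(\UGL(1)\)-structure from \(M_0\), the Wess--Zumino coefficient \(\tensor{F}{_0_a^\beta}\) is by parity the \(\eta^0\)-term of an odd function on \(M_0\). But on a super manifold over \(\bbR^{0|0}\) every odd function in local coordinates reads \(\eta^1 g_1(x)+\eta^2 g_2(x)\) and has no \(\eta^0\)-term, so \(\tensor{F}{_0_a^\beta}=0\) globally on \(M_0\). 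This yields \(\chi_0=0\) and, by pullback, \(\chi=0\) on all of \(\Smooth{M}\).

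I expect the main obstacle to be the global case: the whole argument rests on the absence of odd constants over \(\bbR^{0|0}\), a degree property that fails over a non-trivial base \(B\) and precisely explains why the global gauging is restricted to trivial families.
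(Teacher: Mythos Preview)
Your local argument coincides with the paper's: pass to superconformal coordinates via Example~\ref{ex:SRS}, set \(F_+=\partial_\theta+\theta\partial_z\), \(F_z=\partial_z\), and \(i^\#\theta=0\); then \(F_a=\partial_{x^a}\) has no \(\partial_{\eta^\beta}\)-component and \(\chi\) vanishes on \(U\). The paper's proof stops at that sentence and does not spell out the extension of \(i_U\) and the \(\UGL(1)\)-frame to global data, nor does it prove the trivial-family clause; you supply both.

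One point to tighten in the local extension: ``pick any global \(\UGL(1)\)-reduction'' is not what you want. An arbitrary global reduction need not restrict to your chosen frame on \(U\), and changing the \(\UGL(1)\)-reduction shifts \(\chi\) by a super Weyl term \(\gamma^k s\) (cf.\ the discussion after~\eqref{eq:GMatrixDecomposition}), which is generically nonzero. What contractibility of \(G/\UGL(1)\) actually buys is that your local reduction on \(U\) \emph{extends} to a global one; phrase it that way and the argument is complete.

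Your global argument for trivial families is correct and fills a gap the paper's proof leaves open. It can be stated without reference to Wess--Zumino coordinates: over \(B=\bbR^{0|0}\) the gravitino \(\chi_0\) is an even section of \(T^\vee\Smooth{M_0}\otimes S_0\) on an ordinary \(2\)-manifold, so its components relative to the odd frame \(s_\beta\) are odd functions on a purely even space and must vanish identically. Pulling back along \(M_0\times B\to M_0\) then gives \(\chi=0\) on \(\Smooth{M}\). This is exactly the parity mechanism you describe via \(\tensor{F}{_0_a^\beta}\), and your closing remark correctly isolates the presence of odd parameters in \(B\) as the obstruction for nontrivial families.
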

\begin{proof}
	Choose around \(m\) complex coordinates \((z,\theta)\) such that \(\cD=\langle\partial_\theta + \theta\partial_z\rangle\) (see example~\ref{ex:SRS}).
	Let the \(\UGL(1)\)-structure be given by the frames \(F_z = \partial_z\) and \(F_+ = \partial_\theta + \theta\partial_z\) and the embedding by \(i^\#\theta = 0\).
	Then the gravitino vanishes on \(U\).
\end{proof}

\begin{thm}
	The infinitesimal deformations of a super Riemann surface \(M\) with embedding \(i\colon \Smooth{M}\to M\) given by \(g\) and \(\chi\) on \(\Smooth{M}\) are given by
	\begin{equation}
		H^0(T^\vee\Smooth{M}\otimes_\bbC T^\vee\Smooth{M})\oplus H^0(S^\vee\otimes_\bbC S^\vee\otimes_\bbC S^\vee)
	\end{equation}
	Here \(H^0\) denotes holomorphic sections.
\end{thm}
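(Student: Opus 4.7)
My plan is to use the correspondence established above, together with the infinitesimal super symmetry computed in the preceding proposition, to translate deformations of the super Riemann surface $M$ modulo $\SDiff(M)$ into deformations of $(g,\chi)$ on $\Smooth{M}$ modulo the four infinitesimal gauge actions: Weyl, super Weyl, $\Diff(\Smooth{M})$, and $\SUSY$. The tangent space to the space of pairs $(g,\chi)$ at the reference point is $\Gamma(\Sym^2 T^\vee\Smooth{M})\oplus\Gamma(T^\vee\Smooth{M}\otimes S)$, and so the task reduces to computing the quotient of this direct sum by the linearised gauge action. By the preceding gauging lemma I may work locally in a Wess--Zumino gauge with $\chi=0$, at which point the first line of \eqref{eq:MetricGravitinoSUSY} shows that the bosonic and fermionic sectors decouple to first order, because the $\SUSY$-variation of $f_a$ is quadratic in $\chi$.

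For the bosonic summand the computation is classical Fischer--Tromba/Teichmüller theory. Weyl removes the conformal trace of $\delta g$, while the infinitesimal diffeomorphism action by $\cL_v g$ removes the divergence. The resulting transverse-traceless symmetric two-tensors are identified, using the complex structure induced by $g$, with holomorphic quadratic differentials on $\Smooth{M}$, yielding the summand $H^0(T^\vee\Smooth{M}\otimes_\bbC T^\vee\Smooth{M})$, of complex dimension $3g-3$ for genus $g\geq 2$ by Riemann--Roch.

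For the fermionic summand I first use the super Weyl action $\delta\chi_a\mapsto\delta\chi_a+\gamma_a s$ to impose the $\gamma$-traceless gauge $\gamma^a\delta\chi_a=0$. The $\gamma$-traceless subspace of $T^\vee\Smooth{M}\otimes S$ is the spin-$3/2$ irreducible summand; using $S\otimes_\bbC S\cong T\Smooth{M}$ and the induced complex structure it splits as $(S^\vee)^{\otimes 3}\oplus\overline{(S^\vee)^{\otimes 3}}$. At $\chi=0$ the super symmetry reduces to $\delta\chi_a=\nabla^{LC}_{f_a}q$, and its projection onto the spin-$3/2$ bundle is, in a local holomorphic frame, the chiral Dirac operator viewed as the $\bar\partial$-operator on the holomorphic line bundle $(S^\vee)^{\otimes 3}$ applied to (the complex conjugate of) $q$. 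Standard Dolbeault theory for line bundles on a compact Riemann surface then identifies the cokernel of this $\bar\partial$-operator with $H^0((S^\vee)^{\otimes 3})$, which has complex dimension $2g-2$ by Riemann--Roch, matching the expected super Teichmüller dimension $(3g-3)|(2g-2)$.

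The main obstacle is the careful identification in the fermionic step: to verify that the operator coming from the $\SUSY$-action indeed coincides, after $\gamma$-trace quotient and holomorphic projection, with the Dolbeault differential on $(S^\vee)^{\otimes 3}$, and that no cross-coupling with the bosonic sector remains when $\chi\neq 0$. The latter is controlled by the fact that the extra term $\langle\gamma^b\chi_b,\chi_a\rangle\gamma^1\gamma^2 q$ in the second line of \eqref{eq:MetricGravitinoSUSY} is already linear in $\chi$, so a first-order perturbation in $\chi$ about the Wess--Zumino gauge $\chi=0$ produces only higher-order corrections and does not alter the cohomology groups. The former is a local computation in Wess--Zumino coordinates that ultimately reduces to the standard identification of the chiral Dirac operator on a Riemann surface with $\bar\partial$ on the half-canonical bundle.
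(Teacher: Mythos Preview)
Your overall strategy coincides with the paper's: infinitesimal deformations of $M$ are translated into deformations $(h,\rho)$ of $(g,\chi)$, and one quotients by the linearised Weyl, super Weyl, diffeomorphism and $\SUSY$ actions; what survives is a holomorphic quadratic differential together with a holomorphic section of $(S^\vee)^{\otimes 3}$. Where the paper writes the holomorphicity conditions on the remainders $D$ and $\mathfrak{D}$ explicitly in conformal coordinates as Laplace equations for the gauge parameters $X^k$ and $q^\alpha$ and solves them, you invoke the Fischer--Tromba picture and a Dolbeault-cokernel identification. At $\chi=0$ these are equivalent packagings of the same Hodge-theoretic fact, so your treatment of that case is fine.

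The substantive difference, and the point where your argument has a gap, is the passage from $\chi=0$ to general $\chi$. Two remarks. First, the $\SUSY$-variation of $f_a$ in \eqref{eq:MetricGravitinoSUSY} is \emph{linear} in $\chi$, not quadratic; consequently the bosonic and fermionic sectors do \emph{not} decouple to first order in $\chi$ but only at $\chi=0$ exactly. Likewise $\cL_X\chi$ is linear in $\chi$ and couples the diffeomorphism action into the gravitino sector. Your perturbation paragraph only discusses the quadratic term $\langle\gamma^b\chi_b,\chi_a\rangle\gamma^1\gamma^2 q$ in $\nabla^S q$ and misses these first-order cross-terms. Second, the local gauging lemma only sets $\chi=0$ on a coordinate patch (globally only for trivial families), whereas the conclusion concerns global holomorphic sections, so a ``work locally with $\chi=0$ and glue'' argument is not available. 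The paper's route is to keep $\chi$ nonzero and observe that the combined holomorphicity conditions for $D$ and $\mathfrak{D}$ form a \emph{coupled} elliptic system for $(X,q)$ of the shape
\begin{align}
\Delta X &= f(X,X',q,q'), & \Delta q &= g(X,X',X'',q,q'),
\end{align}
solvable because the equation for $D$ involves $q$ only through undifferentiated terms (since $\susy(q)$ acting on the metric contains no derivative of $q$). That structural observation is what replaces your stability-of-cohomology heuristic; without it the argument for nonzero $\chi$ is incomplete.
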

\begin{proof}
	Let the super Riemann surface \(M\) be given in by a metric \(g\) and a gravitino \(\chi\) on \(\Smooth{M}\) with embedding \(i\colon \Smooth{M}\to M\).
	Any infinitesimal deformation is given by an infinitesimal deformation of metric \(h\) and gravitino \(\rho\).
	However not every infinitesimal deformation of metric and gravitino give rise to an infinitesimal deformation of the super Riemann surface.
	The infinitesimal deformations of the metric and gravitino induced by Weyl and super Weyl, diffeomorphisms, and super symmetry do lead to equivalent super Riemann surfaces.
	We will thus need to decompose the infinitesimal deformation \(h\) of the metric as
	\begin{equation}\label{eq:DecompositionOfMetric}
		h = \lambda g + \Lie_X g + \susy(q) + D
	\end{equation}
	for some infinitesimal Weyl transformation with parameter \(\lambda\), a Lie derivative along the vector field \(X\) and infinitesimal super symmetry transformation \(\susy(q)\) given by the spinor \(q\) as in~\eqref{eq:MetricGravitinoSUSY}.
	The parameters \(\lambda\), \(X\), and \(q\) need to be determined.
	The remaining part \(D\) is a true even infinitesimal deformations of the super Riemann surface.
	Analogously, the infinitesimal deformation \(\rho\) of the gravitino needs to be decomposed in
	\begin{equation}\label{eq:DecompositionOfGravitino}
		\rho = \gamma t + \Lie_X \chi + \nabla^S q + \mathfrak{D}
	\end{equation}
	for some spinor \(t\), that rests to be determined.
	The remaining part \(\mathfrak{D}\) is a true odd infinitesimal deformation of the super Riemann surface.

	We will work in local holomorphic coordinates \(z=x^1+\ic x^2\) defined in some open neighbourhood \(U\).
	First we consider the special case \(g_{ij} = \delta_{ij}\) and \(\chi = 0\).
	Let \(X=X^k\partial_{x^k}\).
	The equation~\eqref{eq:DecompositionOfMetric} simplifies to
	\begin{equation}
		h_{ij} = \lambda \delta_{ij} + \left(\partial_{x^i}X^k\right)\delta_{kj} + \left(\partial_{x^j}X^k\right)\delta_{ki} + D_{ij}
	\end{equation}
	Letting
	\begin{equation}
		\lambda = \frac12 h_{ij}\delta^{ij} - \left(\partial_{x^k}X^k\right)
	\end{equation}
	it is possible to assume \(D_{ij}\) is symmetric and trace free.
	As a consequence, the bilinear form \(D\) can be identified with a section of \(T^\vee\Smooth{M}\otimes_\bbC T^\vee\Smooth{M}\).
	\begin{equation}
		\begin{pmatrix}
			a & b \\
			b & -a
		\end{pmatrix}
		\mapsto \left(a-\ic b\right)\d{z}\otimes\d{z}
	\end{equation}
	It is possible to choose the vector field \(X\) such that \(D\) is a holomorphic quadratic differential.
	The holomorphicity condition for \(D\) is equivalent to the following Laplace equations for \(X^k\):
	\begin{align}
		0 = \partial_{x^1}a + \partial_{x^2}b &= \frac12\partial_{x^1}\left(h_{11} - h_{22}\right) + \partial_{x^2}h_{12} - \partial_{x^1}^2X^1 - \partial_{x^2}^2X^1\\
		0 = -\partial_{x^1}b + \partial_{x^2}a &= \frac12\partial_{x^2}\left(h_{11} - h_{22}\right) - \partial_{x^1}h_{12} + \partial_{x^1}^2X^2 + \partial_{x^2}^2X^2\\
	\end{align}
	We have decomposed every infinitesimal deformation \(h\) of the metric \(g\) into an infinitesimal Weyl transformation, a Lie derivative and a holomorphic quadratic differential.
	The holomorphic quadratic differentials represent the true even deformations of \(M\).

	In an analogous manner we are going to proceed with the deformation \(\rho\) of the gravitino.
	It will be convenient to consider \(\rho\) as a section of \(T^\vee\Smooth{M}\otimes S^\vee\).
	We choose a complex basis \(s_+\) for \(S\) such that \(s_+\otimes s_+ = \partial_z\), and let \(s_+=s_3 - \ic s_4\).
	The corresponding dual basis will be denoted \(s^+\) and \(s^\alpha\) respectively.
	The vector bundle \(T^\vee\Smooth{M}\otimes S^\vee\) can be decomposed in \(S^\vee\oplus S^\vee\otimes_\bbC S^\vee\otimes_\bbC S^\vee\).
	In the basis we use here the spinor part of an arbitrary section \(\rho\) is given by \(s^\alpha{\gamma^a}_\alpha^\beta\rho_{a\beta}\).
	The equation~\eqref{eq:DecompositionOfGravitino} is given in our local coordinates by
	\begin{equation}
		\rho_{a\beta} = \delta_{ab}{\gamma^a}_\beta^\mu\varepsilon_{\mu\nu}t^\nu - \varepsilon_{\beta\mu}\left(\partial_{x^a}q^\mu\right) + \mathfrak{D}_{a\beta}
	\end{equation}
	It is possible to fix the spinor \(t\) such that \(\mathfrak{D}\) is in \(S^\vee\otimes_\bbC S^\vee\otimes_\bbC S^\vee\), i.e.
	\begin{equation}
		0 = {\gamma^a}_\alpha^\beta\left(\rho_{a\beta} + \varepsilon_{\beta\mu}\partial_{x^a}q^\mu\right)  - 2\varepsilon_{\alpha\nu}t^\nu
	\end{equation}
	Consequently the coefficients of \(\mathfrak{D}\) fulfil
	\begin{align}
		\mathfrak{D}_{13} + \mathfrak{D}_{24} &= 0 & \mathfrak{D}_{23} - \mathfrak{D}_{14} &= 0
	\end{align}
	The cospinor valued differential form \(\mathfrak{D}\) can be identified with
	\begin{equation}
		\left(\mathfrak{D}_{13} + \ic\mathfrak{D}_{14}\right) \d{z}\otimes s^+
	\end{equation}
	The condition, for \(\mathfrak{D}\) to be a holomorphic section of \(S^\vee\otimes_\bbC S^\vee\otimes_\bbC S^\vee\) is given again by the Cauchy--Riemann equations
	\begin{align}
		0 = \partial_{x^1}\mathfrak{D}_{13} - \partial_{x^2}\mathfrak{D}_{14} &= \frac12\left(\partial_{x^1}\left(\rho_{13} - \rho_{24}\right) + \partial_{x^2} \left(\rho_{14} + \rho_{23}\right)\right) + \partial_{x^1}^2q^4 + \partial_{x^2}^2q^4 \\
		0 = \partial_{x^2}\mathfrak{D}_{13} + \partial_{x^1}\mathfrak{D}_{14} &= \frac12\left(\partial_{x^2}\left(\rho_{13} - \rho_{24}\right) - \partial_{x^1} \left(\rho_{14} + \rho_{23}\right)\right) - \partial_{x^1}^2q^3 - \partial_{x^2}^2q^3
	\end{align}
	We can thus decompose the infinitesimal deformations of the gravitino in an infinitesimal super Weyl transformation, an infinitesimal super symmetry, and a holomorphic section of \(S^\vee\otimes_\bbC S^\vee\otimes_\bbC S^\vee\).

	In the general case \(\chi\neq 0\) the Cauchy--Riemann equations for \(D\) and \(\mathfrak{D}\) are given by a system of coupled partial differential equations for \(X\) and \(q\) of generic form:
	\begin{align}
		\Delta X &= f(X, X', q, q') & \Delta q &= g(X, X', X'', q, q')
	\end{align}
	The system of Cauchy--Riemann equations can thus be solved by the theory of elliptic partial differential equations, since the Cauchy--Riemann equations for \(D\) do not contain second derivatives of \(q\).
	The reason is that \(\susy(q)\) does not contain derivatives of~\(q\).
\end{proof}
Similar statements for trivial families can be found in~\cite{LBR-MSRS} or~\cite{S-GAASTS}.
However the version given here is more general, as it allows for non-trivial families.
Furthermore the proof given here shows directly which deformations of metric and gravitino correspond to infinitesimal deformations of the given super Riemann surface.

The complex dimension of the infinitesimal deformation space can be calculated by the theorem of Riemann-Roch in the case of \(B=\bbR^{0|0}\).
The dimension is found to be \(3p-3|2p-2\) for genus \(p\geq 2\).
 \section{The action functional}
We now turn again to the action functional~\eqref{AF}.
In this section we assume that \(M\) is a fiberwise compact family of super Riemann surfaces with a compatible super metric \(m\).
Let \(N\) be an arbitrary (super) manifold with Riemannian metric \(n\) and Levi-Civita covariant derivative \(\nabla^{TN}\).
For details on Levi-Civita covariant derivatives on super manifolds see~\cite{G-RSG}.
Consider a morphism \(\Phi\colon M\to N\).
The action
\begin{equation}
\label{eq:AFSRS}
	A(M, \Phi) = \frac12\int_M \|\left.T \Phi\right|_{\cD}\|^2_{\left.m^\vee\right|_{\cD^\vee}\otimes \Phi^*n} [\d{vol_m}]
\end{equation}
might be seen as a generalization of the harmonic action functional to super Riemann surfaces.
Remark that in contrast to the harmonic action functional the tangent map \(T\Phi\) is restricted to the subbundle \(\cD\) in \(TM\).
Given \(\UGL(1)\)-frames \(F_A\) the action can be written as
\begin{equation}
\label{eq:AFSRSFrames}
	A(M, \Phi) = \frac12\int_M \varepsilon^{\alpha\beta} \langle F_\alpha\Phi, F_\beta\Phi\rangle_{\Phi^*n} [F^1 F^2 F^3 F^4].
\end{equation}
The action~\eqref{eq:AFSRS} can be found in different forms in the literature, see in particular~\cites{GN-GSRS}{dHP-GSP}.
In~\cite{GN-GSRS} one can find an explicit proof for the \(G\)-invariance of~\eqref{eq:AFSRSFrames}.
Thus the action functional does not depend on the metric \(m\), but rather only on the super Riemann surface structure, i.e.\ the \(G\)-structure.
\begin{prop}
	The Euler--Lagrange equation of~\eqref{eq:AFSRS} for \(\Phi\) is
	\begin{equation}\label{eq:EL}
		0 = \Delta^\cD \Phi = \varepsilon^{\alpha\beta}\nabla_{F_\alpha} F_\beta\Phi + \varepsilon^{\alpha\beta}\left(\Div F_\alpha\right)F_\beta\Phi
	\end{equation}
	We will call the differential operator \(\Delta^\cD\), defined here, the \(\cD\)-Laplace operator.
\end{prop}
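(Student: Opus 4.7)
The plan is to take a compactly supported smooth variation $\Phi_t\colon M\to N$ with $\Phi_0=\Phi$, set $V={\left.\frac{\d}{\d t}\right|}_{t=0}\Phi_t\in\Gamma(\Phi^*TN)$, and compute ${\left.\frac{\d}{\d t}\right|}_{t=0} A(M,\Phi_t)$ using the frame expression~\eqref{eq:AFSRSFrames}. Since the $\UGL(1)$-frames $F_A$ and the Berezinian density $[F^1F^2F^3F^4]$ are fixed, the entire $t$-derivative lands on the pairing $\varepsilon^{\alpha\beta}\langle F_\alpha\Phi_t,F_\beta\Phi_t\rangle$.

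The first substantive step is to commute the $t$-derivative through an $F_\alpha$. Viewing the variation as a single morphism $\hat\Phi\colon M\times I\to N$, the vector fields $\partial_t$ and the lift of $F_\alpha$ to $M\times I$ commute; combined with metric compatibility and torsion-freeness of $\nabla^{TN}$ pulled back to $\hat\Phi^*TN$, this yields $\nabla_{\partial_t}^{\hat\Phi}(F_\alpha\Phi_t)=\nabla_{F_\alpha}^{\hat\Phi}V$. Differentiating the pairing and contracting with $\varepsilon^{\alpha\beta}$, and noting that the super symmetry of the $\Phi^*n$-pairing of two odd sections combines with the antisymmetry of $\varepsilon^{\alpha\beta}$ so that the two Leibniz terms add, the expected output is
\begin{equation}
	{\left.\frac{\d}{\d t}\right|}_{t=0} A(M,\Phi_t) = \int_M \varepsilon^{\alpha\beta}\bigl\langle \nabla_{F_\alpha}^{\Phi} V,\, F_\beta\Phi\bigr\rangle\,[F^1F^2F^3F^4].
\end{equation}

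Next I would integrate by parts. The identity I need is that, for any (possibly odd) vector field $X$ on $M$ and compactly supported function $f$,
\begin{equation}
	\int_M X(f)\,[\d vol_m] = -\int_M (\Div X)\,f\,[\d vol_m],
\end{equation}
with $\Div X$ defined by $\Lie_X[\d vol_m]=(\Div X)[\d vol_m]$. Via Proposition~\ref{prop:Integration} and~\eqref{eq:IntCoord}, this reduces to ordinary integration by parts on $\Smooth{M}$. Applying it with $X=F_\alpha$ and $f=\langle V,F_\beta\Phi\rangle$, then using metric compatibility once more to move $F_\alpha$ off $V$, one arrives at
\begin{equation}
	{\left.\frac{\d}{\d t}\right|}_{t=0} A(M,\Phi_t) = -\int_M \varepsilon^{\alpha\beta}\bigl\langle V,\;\nabla_{F_\alpha}^{\Phi} F_\beta\Phi + (\Div F_\alpha)\,F_\beta\Phi\bigr\rangle\,[F^1F^2F^3F^4],
\end{equation}
and arbitrariness of $V$ forces the bracketed expression to vanish, which is precisely~\eqref{eq:EL}.

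The hard part will be the bookkeeping of super signs. Each commutation through an odd $F_\alpha$, the identification of the two Leibniz contributions under the $\varepsilon^{\alpha\beta}$-contraction, and the integration-by-parts step all carry potential $(-1)$-factors, and it is not automatic that the divergence identity retains the clean classical form above when $X$ is odd. I would verify this in adapted local coordinates $(x^a,\eta^\alpha)$ using~\eqref{eq:IntCoord}, after which the structural derivation is the standard harmonic-maps argument, with the odd frame $F_+,F_-$ playing the role of a local orthonormal frame of $TM$.
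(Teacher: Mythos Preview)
Your proposal is correct and follows essentially the same route as the paper: introduce a variation \(\Phi_t\) with infinitesimal variation field \(V=\Xi\), use torsion-freeness and the commutativity of \(\partial_t\) with \(F_\alpha\) to swap derivatives, then integrate by parts via the Lie derivative of the Berezinian volume form to isolate \(\Delta^{\cD}\Phi\) against \(V\). The only cosmetic difference is that the paper writes the integration-by-parts step directly as \(\int_M F_\alpha(\cdot)\,[\d vol_m] = -\int_M (\cdot)\,\Lie_{F_\alpha}[\d vol_m]\) rather than passing through Proposition~\ref{prop:Integration}, but this is the same identity.
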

\begin{proof}
	Let \(\Phi_t: M\times\bbR\to N\) be a perturbation of \(\Phi_0=\Phi\).
	One can expand \(\Phi_t\) in \(t\) around \(0\) and obtains
	\begin{equation}
		\Phi_t=\Phi_0+t\partial_\alpha\Phi_t|_{t=0}+O(t^2)
	\end{equation}
	Let us denote \(\partial_t\Phi_t|_{t=0}=\Xi\in\Gamma(\Phi^*TN)\) and expand \(A\) in \(t\) around \(0\):
	\begin{multline}
		\left.\frac{\d}{\d t}\right|_{t=0}A(\Phi_t, F_A) = \frac12\left.\frac{\d}{\d t}\right|_{t=0}\int_M \varepsilon^{\alpha\beta}\langle F_\alpha\Phi_t, F_\beta\Phi_t\rangle [F^1F^2F^3F^4] \displaybreak[0]\\
		= \frac12\left.\int_M\partial_t\varepsilon^{\alpha\beta}\langle F_\alpha\Phi_t, F_\beta\Phi_t\rangle [F^1F^2F^3F^4] \right|_{t=0} \displaybreak[0]\\
		= \left.\int_M\varepsilon^{\alpha\beta}\langle \nabla^{\Phi_t^*TN}_{\partial_t}F_\alpha\Phi_t, F_\beta\Phi_t\rangle [F^1F^2F^3F^4] \right|_{t=0} \displaybreak[0]\\
		= \left.\int_M\varepsilon^{\alpha\beta}\langle \nabla^{\Phi_t^*TN}_{F_\alpha}\partial_t\Phi_t, F_\beta\Phi_t\rangle [F^1F^2F^3F^4] \right|_{t=0} \displaybreak[0]\\
		= \int_M\varepsilon^{\alpha\beta} \langle\nabla^{\Phi^*TN}_{F_\alpha}\Xi, F_\beta\Phi\rangle [F^1F^2F^3F^4] \displaybreak[0]\\
		= -\int_M\varepsilon^{\alpha\beta}\left(\langle\Xi, \nabla^{\Phi^*TN}_{F_\alpha}F_\beta\Phi\rangle [F^1F^2F^3F^4] - \langle\Xi, F_\beta\Phi\rangle \Lie_{F_\alpha}[F^1F^2F^3F^4] \right)
	\end{multline}
	With the definition of divergence
	\begin{equation}
		\Lie_{F_\alpha} [F^1 F^2 F^3 F^4] = \left(\Div F_\alpha\right) [F^1F^2F^3F^4]
	\end{equation}
	the result follows.
	Of course the Euler--Lagrange equation~\eqref{eq:EL} is \(G\)-invariant like the action~\eqref{eq:AFSRS}.
	The \(\cD\)-Laplace, however, is only \(\UGL(1)\)-invariant.
\end{proof}

We now turn to the question how the action~\eqref{eq:AFSRS} can be represented on an underlying even manifold \(i\colon \Smooth{M}\to M\).
\begin{defn}
\label{defn:CompFields}
	Let \(\Phi\colon M\to N\) be a morphism and \(i\colon \Smooth{M}\to M\) be an underlying even manifold.
	We call the fields
	\begin{align}
		\varphi\colon \Smooth{M}&\to N & \psi\colon \Smooth{M}&\to S^\vee\otimes \varphi^*TN & F\colon \Smooth{M}&\to \varphi^*TN \\
		\varphi &= \Phi\circ i & \psi &= s^\alpha \otimes i^* F_\alpha\Phi & F &= \frac12i^*\Delta^\cD\Phi
	\end{align}
	component fields of \(\Phi\).
	Recall that \(s^\alpha\) is the dual basis to the basis \(s_\alpha = i^*F_\alpha\) of the spinor bundle \(S = i^*\cD\) on \(\Smooth{M}\).
\end{defn}
\begin{rem}
	Suppose that \(X^A = (x^a, \eta^\alpha)\) are Wess--Zumino coordinates for the Wess--Zumino frame \(F_A\).
	Let furthermore \(Y^B\) be local coordinates on \(N\).
	The map \(\Phi\colon M\to N\) is then given by the functions
	\begin{equation}
		\Phi^\#Y^B = f_0^B + \eta^\mu f_\mu^B + \eta^2\eta^1 f_{21}^B
	\end{equation}
	It holds that \(f_0^B = \varphi^\#Y^B\) because \(i^\#\eta^\mu = 0\).
	By the properties of Wess--Zumino coordinates we have that \(i^*F_\alpha = i^*\partial_{\eta^\alpha}\) and thus \(f_\mu^B = \psi_\mu Y^B\).
	Here \(\psi_\mu\) is the coefficient of \(\psi\) in the basis \(s^\mu\) and consequently a derivation on \(\cO_N\) with values in \(\cO_M\).
	If the target manifold \(N=\bbR^p\) is Euclidean space one can show that \(i^*\Delta^\cD = 2i^*\partial_{\eta^1}\partial_{\eta^2}\).
	Consequently the map \(\Phi\) can be written schematically as
	\begin{equation}
		\Phi = \varphi + \eta^\mu \psi_\mu + \eta^2\eta^1 F.
	\end{equation}
\end{rem}

\begin{thm}
\label{thm:AF}
	Let \(M\) be a fiberwise compact family of super Riemann surfaces and \(i\colon \Smooth{M}\to M\) an underlying even manifold.
	We denote by \(g\), \(\chi\), and \(g_S\) respectively the metric, gravitino, and spinor metric on \(\Smooth{M}\) constructed in Section~\ref{sec:SRS} for a given \(\UGL(1)\)-structure on \(M\).
	Let \(\Phi\colon M\to N\) be a morphism to a Riemannian super manifold \((N,n)\) and \(\varphi\), \(\psi\), and \(F\) its component fields, as introduced in Definition~\ref{defn:CompFields}.
	It holds
	\begin{multline}\label{eq:AFRed}
		A(M, \Phi) = A(\varphi, g, \psi, \chi, F) = \int_{\Smooth{M}} \left(\vphantom{\frac12}\| \d\varphi\|^2_{g^\vee\otimes \varphi^*n} + \langle \psi, \Dirac\psi\rangle_{g_S^\vee\otimes \varphi^*n} - \langle F, F\rangle_{\varphi^*n} \right.\\
		+ 2\langle \chi_a\gamma^b\gamma^a\partial_{x^b}\varphi,\psi\rangle_{g_S^\vee\otimes\varphi^*n} + \frac{1}{2}\langle\chi_a, \gamma^b\gamma^a\chi_b\rangle_{g_S}\langle\psi,\psi\rangle_{g_S^\vee\otimes\varphi^*n} \\
		\left.+ \frac16\varepsilon^{\alpha\beta}\varepsilon^{\gamma\delta}\langle R^{\varphi^*TN}(\psi_\alpha, \psi_\gamma)\psi_\delta, \psi_\beta\rangle_{\varphi^*n}\right) \d{vol_g}
	\end{multline}
\end{thm}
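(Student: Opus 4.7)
The plan is to fix a Wess--Zumino frame $F_A$ with Wess--Zumino coordinates $X^A=(x^a,\eta^\alpha)$ by Lemmas~\ref{lemma:ExistenceWZFrames} and~\ref{lemma:ExistenceWZCoordinates}, reduce the Berezin integral on $M$ to an ordinary integral over $\Smooth{M}$ via Proposition~\ref{prop:Integration}, and then match the coefficient of $\eta^2\eta^1$ in the local integrand $\varepsilon^{\alpha\beta}\langle F_\alpha\Phi, F_\beta\Phi\rangle_{\Phi^*n}[F^1F^2F^3F^4]$ against the right-hand side of~\eqref{eq:AFRed}. By Lemma~\ref{lemma:WZPair} every higher-order coefficient of $F_A$ in~\eqref{eq:FrameExpansion} is a universal polynomial in the orthonormal frame $f_a^b=\tensor{F}{_0_a^b}$, the gravitino coefficients $\tensor{F}{_0_a^\beta}$, and their first derivatives. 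Consequently the dual coframe $F^A$, and hence the Berezinian $[F^1F^2F^3F^4]$, admits a completely explicit $\eta$-expansion whose top coefficient is proportional to $\d{vol_g}$, as already checked in the proof of Proposition~\ref{prop:Integration}.

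Next I would expand $\Phi$ and its derivatives. In Wess--Zumino coordinates the Remark preceding the theorem identifies the coordinate coefficients of $\Phi$ with the component fields, so schematically $\Phi=\varphi+\eta^\mu\psi_\mu+\eta^2\eta^1 F$. Applying the $\eta$-expansion of $F_\alpha$ supplied by Lemma~\ref{lemma:WZPair} yields $F_\alpha\Phi$ to order $\eta^2\eta^1$, with coefficients built from $\varphi$, $\d\varphi$, $\psi$, $F$, $\chi$, and the spin connection acting on $\psi$. Substituting into $\varepsilon^{\alpha\beta}\langle F_\alpha\Phi, F_\beta\Phi\rangle_{\Phi^*n}$ and Taylor-expanding the pulled-back metric along $\varphi$, or equivalently reorganizing successive $\eta$-derivatives into covariant derivatives along $\Phi^*TN$ exactly as in the computation leading to~\eqref{eq:EL}, the coefficient of $\eta^2\eta^1$ splits into the Dirichlet term $\|\d\varphi\|_{g^\vee\otimes\varphi^*n}^2$, the Dirac term $\langle\psi,\Dirac\psi\rangle$ coming from the symmetrized first $\eta$-derivative of $F_\alpha\Phi$, the two Yukawa couplings with $\chi$, an auxiliary-field contribution $-\langle F,F\rangle$ from the top coefficient of $\Phi$, and a quartic-spinor term.

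The main obstacle is the precise isolation of the curvature term $\tfrac{1}{6}\varepsilon^{\alpha\beta}\varepsilon^{\gamma\delta}\langle R^{\varphi^*TN}(\psi_\alpha,\psi_\gamma)\psi_\delta,\psi_\beta\rangle$. It arises when two successive odd derivatives of $\Phi$ are reorganized into a pair of covariant derivatives along $\Phi^*TN$: the graded commutator $[\nabla^{\Phi^*TN}_{F_\alpha},\nabla^{\Phi^*TN}_{F_\beta}]$ produces the Riemann tensor of $N$ contracted with $F_\alpha\Phi$ and $F_\beta\Phi$, whose leading $\eta$-coefficients are precisely $\psi_\alpha$ and $\psi_\beta$. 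The combinatorial factor $\tfrac{1}{6}$ comes from symmetrizing over the four spinor indices using the algebraic Bianchi identity for $R^{TN}$ together with the identity $\varepsilon^{\alpha\beta}\varepsilon^{\gamma\delta}+\varepsilon^{\alpha\gamma}\varepsilon^{\delta\beta}+\varepsilon^{\alpha\delta}\varepsilon^{\beta\gamma}=0$. Once this term has been correctly extracted, all remaining contributions are purely algebraic in $\psi$, $\chi$, $\d\varphi$, and $F$, and match~\eqref{eq:AFRed} term by term. Since the whole computation is local and tensorial, and both sides of~\eqref{eq:AFRed} are manifestly globally defined, the identity verified in one Wess--Zumino chart extends to all of $\Smooth{M}$.
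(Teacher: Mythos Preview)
Your proposal is correct and follows essentially the same route as the paper: fix Wess--Zumino frames and coordinates, reduce the Berezin integral to an integral over \(\Smooth{M}\) via Proposition~\ref{prop:Integration}, and expand using Lemma~\ref{lemma:WZPair}. The paper's own argument is only a sketch that rewrites the integrand as \(\tfrac14\, i^*\varepsilon^{\mu\nu}F_\mu F_\nu\bigl(\varepsilon^{\alpha\beta}\langle F_\alpha\Phi,F_\beta\Phi\rangle\,(\Ber F)^{-1}\bigr)\) and defers the full computation to~\cite{EK-DR}; your outline is in fact more explicit than the paper's, in particular on the origin of the curvature term.
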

The idea for the proof of Theorem~\ref{thm:AF} is Lemma~\ref{prop:Integration}.
One uses crucially that integration in the odd directions is locally a derivation.
In Wess--Zumino coordinates \((x^a, \eta^\alpha)\) for \(F_A\) a local expression for the action is given by
\begin{multline}
	A(M, \Phi) = \frac12\int_M \varepsilon^{\alpha\beta} \langle F_\alpha\Phi, F_\beta\Phi\rangle_{\Phi^*n} {\left(\Ber F\right)}^{-1} [\d{x^1}\d{x^2}\d\eta^1\d\eta^2]\\
	= \frac12\int_{\Smooth{M}} i^*\partial_{\eta^1}\partial_{\eta^2}\left(\varepsilon^{\alpha\beta} \langle F_\alpha\Phi, F_\beta\Phi\rangle_{\Phi^*n} {\left(\Ber F\right)}^{-1}\right) \d{x^1}\d{x^2} \\
	= \frac14\int_{\Smooth{M}} i^*\varepsilon^{\mu\nu}F_\mu F_\nu\left(\varepsilon^{\alpha\beta} \langle F_\alpha\Phi, F_\beta\Phi\rangle_{\Phi^*n} {\left(\Ber F\right)}^{-1}\right) \d{x^1}\d{x^2}
\end{multline}
The expansion of the last expression is given in terms of component fields of \(\Phi\) (compare Definition~\ref{defn:CompFields}), and commutators of \(F_\alpha\) and derivatives of \(\Ber F\).
By Lemma~\ref{lemma:WZPair} the coordinate expansion of \(F_\alpha\), its commutators and the Berezinian are determined by \(g\) and \(\chi\).
The full calculation can be found in~\cite{EK-DR}.

It is now clear how the different symmetries of the action functional~\eqref{AF} arise.
Different \(\UGL(1)\)-reductions of the given \(G\)-structure on \(M\) induce metrics and gravitinos on \(\Smooth{M}\) that differ only by Weyl and super Weyl transformations.
The action functional~\eqref{eq:AFSRS} is \(G\)-invariant, and thus in turn the action functional~\eqref{eq:AFRed} is conformally and super Weyl invariant.
The action functional~\eqref{eq:AFSRS} is formulated without any reference to an embedding of an underlying even manifold, but Theorem~\ref{thm:AF} is.
The independence of~\eqref{eq:AFSRS} of the embedding \(i\) translates into super symmetry of~\eqref{eq:AFRed}.
\begin{prop}
	The Euler--Lagrange equations of the action functional~\eqref{eq:AFRed} are given by the components of the Euler--Lagrange equation of~\eqref{eq:AFSRS}:
	\begin{align}\label{eq:ELComp}
		0 &= i^*\Delta^{\cD} \Phi & 0 &= s^\alpha\otimes i^*\nabla_{F_\alpha} \Delta^\cD \Phi & 0 &= i^*\Delta^\cD \Delta^\cD\Phi
	\end{align}
\end{prop}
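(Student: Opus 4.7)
My plan is to reduce the statement to the preceding proposition (the Euler--Lagrange equation $\Delta^\cD\Phi=0$ for the super action $A(M,\Phi)$) via Berezin integration and a component expansion of both the variation and the equation. The key observation is that, through Theorem~\ref{thm:AF}, varying $\Phi$ is equivalent to varying the triple $(\varphi,\psi,F)$ of component fields independently, while $\Delta^\cD\Phi$, being itself a section of $\Phi^*TN$, admits three ``component'' pieces in the sense of Definition~\ref{defn:CompFields}. Matching these level by level should identify the three Euler--Lagrange equations of the reduced action $A(\varphi,g,\psi,\chi,F)$ with the three equations appearing in~\eqref{eq:ELComp}.

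First I would invoke the previous proposition: for any one-parameter family $\Phi_t$ with $\Xi=\partial_t\Phi_t|_{t=0}\in\Gamma(\Phi^*TN)$,
\begin{equation*}
    \delta A(M,\Phi)[\Xi] = -\int_M \langle \Xi,\Delta^\cD\Phi\rangle_{\Phi^*n}\,[\d vol_m].
\end{equation*}
Next, in Wess--Zumino coordinates $(x^a,\eta^\alpha)$ for a Wess--Zumino frame $F_A$, I would expand
\begin{equation*}
    \Xi = \Xi_0 + \eta^\mu\Xi_\mu + \eta^2\eta^1\Xi_{21}, \qquad \Delta^\cD\Phi = P_0 + \eta^\mu P_\mu + \eta^2\eta^1 P_{21}.
\end{equation*}
Using the remark following Definition~\ref{defn:CompFields}, and the identity $i^*\Delta^\cD=2i^*\partial_{\eta^1}\partial_{\eta^2}$ noted there, the coefficients of the $\eta$-expansion of $\Delta^\cD\Phi$ are, up to connection corrections coming from $\nabla^{TN}$, exactly $i^*\Delta^\cD\Phi$, $s^\alpha\otimes i^*\nabla_{F_\alpha}\Delta^\cD\Phi$ and $\frac{1}{2} i^*\Delta^\cD\Delta^\cD\Phi$; dually, $\Xi_0$, $\Xi_\mu$ and $\Xi_{21}$ correspond to independent variations $\delta\varphi$, $\delta\psi$ and $\delta F$ of the component fields.

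Then I would reduce to $\Smooth{M}$ via Proposition~\ref{prop:Integration}: Berezin integration extracts the coefficient of $\eta^2\eta^1$ in the integrand, and pairing the two expansions level by level gives, with the appropriate Grassmann signs,
\begin{equation*}
    \delta A = -\int_{\Smooth{M}}\bigl(\langle \Xi_0,P_{21}\rangle + \varepsilon^{\mu\nu}\langle \Xi_\mu,P_\nu\rangle + \langle \Xi_{21},P_0\rangle\bigr)\,\d vol_g.
\end{equation*}
Because $\Xi_0$, $\Xi_\mu$ and $\Xi_{21}$ can be prescribed independently, the vanishing of $\delta A$ for every variation of the component fields is equivalent to the simultaneous vanishing of $P_0$, $P_\mu$ and $P_{21}$, i.e.\ to the three equations~\eqref{eq:ELComp}.

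The main obstacle I anticipate is the bookkeeping needed to identify $P_\mu$ with $i^*\nabla_{F_\mu}\Delta^\cD\Phi$ (rather than the naive $i^*F_\mu\Delta^\cD\Phi$), and the top-order coefficient with $\frac{1}{2} i^*\Delta^\cD\Delta^\cD\Phi$, in the non-Euclidean target case: this requires carefully re-packaging the $\eta$-expansions using the pullback connection on $\Phi^*TN$, with matching corrections in $\Xi_\mu$ and $\Xi_{21}$ so that their pairings assemble into manifestly covariant expressions. This is exactly the same flavour of calculation as the one underlying the proof of Theorem~\ref{thm:AF} that the paper defers to~\cite{EK-DR}, and once those component-field identifications are in hand the three-line argument above closes.
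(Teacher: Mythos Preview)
Your proposal is correct and follows essentially the same route as the paper's own sketch: write $\delta A=-\int_M\langle\Xi,\Delta^\cD\Phi\rangle[\d vol_m]$, expand both $\Xi$ and $\Delta^\cD\Phi$ in Wess--Zumino coordinates, and let Berezin integration pair the components level by level so that independent variations of $\varphi$, $\psi$, $F$ force the three equations~\eqref{eq:ELComp}. Your explicit flagging of the covariant-versus-naive derivative bookkeeping for $P_\mu$ and $P_{21}$ is a fair elaboration of what the paper leaves implicit.
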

\begin{proof}[Sketch of proof]
	Schematically the infinitesimal variation \(\Xi\) of \(\Phi\) can be decomposed
	\begin{equation}
		\Xi = \delta\varphi + \eta^\mu \delta\psi_\mu + \eta^2\eta^1 \delta F.
	\end{equation}
	The infinitesimal variation of the action is then given by
	\begin{equation}
		\delta A = -\int_M \langle \Xi, \Delta^\cD \Phi\rangle [\d{vol_m}].
	\end{equation}
	Integration over the odd variables selects the coefficients of highest degree in \(\eta\), so that
	\begin{equation}
		\begin{split}
			\delta A = -\int_{\Smooth{M}} \frac12\langle \delta\varphi, i^*\Delta^\cD\Delta^\cD \Phi\rangle + \langle \delta \psi, s^\alpha \otimes i^*\nabla_{F_\alpha}\Phi\rangle + \langle \delta F, i^*\Delta^\cD \Phi\rangle \d{vol_g} \qedhere
		\end{split}
	\end{equation}
\end{proof}

By Theorem~\ref{thm:SRSReductionOfStructureGroup} different super Riemann surfaces are given by different \(G\)-structures.
The functional~\eqref{eq:AFSRS} is \(G\)-invariant and different \(G\)-structures lead to different values of the functional.
Consequently the action functional~\eqref{eq:AFSRS} is a functional on the moduli space of super Riemann surfaces for fixed \(\Phi\colon M\to N\).
Unfortunately, the component action functional~\eqref{eq:AFRed} can not be interpreted as a functional on the moduli space of super Riemann surfaces directly.
This is due to the fact that super symmetry is studied up to now only infinitesimally.
We conjectured that super symmetry corresponds to a family of embeddings \(i_t\colon \Smooth{M}_t\to M\).
When expressing~\eqref{eq:AFSRS} as an integral over \(\Smooth{M}_t\) also the domain of integration varies.
In Section~\ref{sec:SRS} we have explained that in order to study the moduli space of super Riemann surfaces, it is necessary to quotient the space of all metrics and gravitinos also by super symmetry.
It is not clear how to take the full quotient by super symmetry, nor how to relate this to the integrals over the different domains of integration \(\Smooth{M}_t\).
However infinitesimal properties of the moduli space of super Riemann surfaces can be studied from~\eqref{eq:AFRed}.
\begin{prop}
	Let \(M\) be a super Riemann surface and \(i\colon \Smooth{M}\to M\) an underlying even manifold.
	By the construction in Section~\ref{sec:SRS}, the geometry of \(M\) is determined by a metric \(g\) and a gravitino \(\chi\) on \(\Smooth{M}\).
	Define the energy-momentum tensor \(T\) of \(A(\varphi, g, \psi, \chi, F)\) by
	\begin{equation}\label{eq:AFT}
		\delta_g A(\varphi, g, \psi, \chi, F) = \int_{\Smooth{M}} \delta g \cdot T \d{vol_g}
	\end{equation}
	and the super current \(J\) by
	\begin{equation}\label{eq:AFJ}
		\delta_\chi A(\varphi, g, \psi, \chi, F) = \int_{\Smooth{M}} \delta \chi \cdot J \d{vol_g}
	\end{equation}
	If the fields \(\varphi\), \(\psi\), and \(F\) fulfil the Euler--Lagrange equations~\eqref{eq:ELComp} the energy-momentum tensor \(T\) is the Noether current  associated to the diffeomorphism invariance, whereas the super current \(J\) is the Noether current to super symmetry.
	The tensors \(T\) and \(J\) are related from the viewpoint of super geometry because super symmetry is induced by a particular super diffeomorphism.

	Furthermore, as the Noether currents are conserved quantities, they are divergence free.
	Consequently, the energy-momentum tensor \(T\) is a holomorphic quadratic differential and the super current \(J\) is a holomorphic section of \(S^\vee\otimes S^\vee\otimes S^\vee\).

	Geometrically, the integrals~\eqref{eq:AFT} and~\eqref{eq:AFJ} can be viewed as cotangent vectors of the moduli space of super Riemann surfaces at \(M\).
\end{prop}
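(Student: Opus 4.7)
The plan is to derive both conservation laws by applying Noether's theorem to the two continuous invariances of $A(\varphi,g,\psi,\chi,F)$ --- diffeomorphisms of $\Smooth{M}$ and super symmetry --- then to use Weyl and super Weyl invariance to cut the resulting tensors down to holomorphic sections, and finally to match them against the tangent-space description of the moduli space given in the previous theorem. First I would handle diffeomorphisms. For an infinitesimal diffeomorphism generated by a vector field $X$ on $\Smooth{M}$, all geometric fields vary by Lie derivatives, $\delta g=\Lie_X g$ and $\delta\chi=\Lie_X\chi$, and the matter fields vary analogously. Diffeomorphism invariance makes the total variation vanish, and on shell (when $\varphi,\psi,F$ satisfy the Euler--Lagrange equations~\eqref{eq:ELComp}) the matter contributions drop out, so that
\[0=\int_{\Smooth{M}}\bigl(\Lie_X g\cdot T+\Lie_X\chi\cdot J\bigr)\d vol_g.\]
Integrating by parts in $X$ and using the arbitrariness of $X$ identifies $T$ as the Noether current for diffeomorphisms and produces a divergence identity for $T$ coupling to $\chi$ and $J$; in particular, at $\chi=0$ one recovers $\Div T=0$.

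Next I would run the same argument for super symmetry. The infinitesimal super symmetry transformation with odd parameter $q$ acts on $(g,\chi)$ by the formulas~\eqref{eq:MetricGravitinoSUSY} and on the component fields by the super symmetry transformations listed in the introduction. Super symmetry invariance together with the Euler--Lagrange equations gives
\[0=\int_{\Smooth{M}}\bigl(\susy(q)(g)\cdot T+\nabla^S q\cdot J\bigr)\d vol_g,\]
and integration by parts in $q$ identifies $J$ as the Noether current dual to the operator $\nabla^S$. The coupling of $T$ and $J$ in the same identity is exactly the statement that on $M$ they are two pieces of a single super-diffeomorphism Noether current, reflecting the fact that super symmetry on $\Smooth{M}$ is induced by the super-diffeomorphism on $M$ realised as a change of embedding $i$.

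For the holomorphicity claims I would exploit Weyl and super Weyl invariance. Taking $\delta g=\lambda g$ in~\eqref{eq:AFT} and using conformal invariance forces $\mathrm{tr}_g T=0$; taking $\delta\chi^k=\gamma^k s$ in~\eqref{eq:AFJ} and using super Weyl invariance forces the analogous gamma-trace of $J$ to vanish. Combined with the divergence identities from the two Noether arguments above, $T$ becomes a symmetric, trace-free, divergence-free covariant $2$-tensor and $J$ becomes a gamma-trace-free, divergence-free section of $T^\vee\Smooth{M}\otimes S^\vee$. These are precisely the Cauchy--Riemann equations that appeared in the proof of the previous theorem as the conditions for the remainders $D$ and $\mathfrak{D}$ to represent elements of $H^0(T^\vee\Smooth{M}\otimes_\bbC T^\vee\Smooth{M})$ and $H^0(S^\vee\otimes_\bbC S^\vee\otimes_\bbC S^\vee)$; hence $T$ is a holomorphic quadratic differential and $J$ a holomorphic section of $S^\vee\otimes_\bbC S^\vee\otimes_\bbC S^\vee$.

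The cotangent vector interpretation is then a formal consequence of the deformation theorem. The pairings~\eqref{eq:AFT} and~\eqref{eq:AFJ} define linear functionals on the space of infinitesimal deformations $(\delta g,\delta\chi)$, and by Weyl, super Weyl, diffeomorphism and super symmetry invariance of $A$ they vanish on the subspaces of deformations generated by these symmetries. They therefore descend to linear functionals on the quotient, which by the previous theorem is exactly the tangent space of the moduli space of super Riemann surfaces at $M$. The main obstacle in making this last step globally rigorous is the one already flagged in the paper: a precise definition of the full $\SUSY$-quotient on the finite level is not yet available, so the cotangent interpretation has to be read in the infinitesimal sense used throughout this section. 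A secondary technical point is that the Noether derivations for $T$ and $J$ require careful bookkeeping, since the super symmetry variations mix $g$, $\chi$ and the matter fields, and one must check that the matter contributions really do cancel after using~\eqref{eq:ELComp} --- this is where the identification of super symmetry with a change of embedding is most useful, because it packages all of these terms as restrictions of a single $G$-invariant identity on $M$.
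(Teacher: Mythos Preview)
The paper does not give a separate proof of this proposition: the statement is presented as self-explanatory, with the argument folded into the proposition text itself (Noether currents from diffeomorphism and super symmetry invariance; conservation giving divergence-freeness; hence holomorphicity), and the detailed computations are deferred to~\cite{EK-DR}. Your proposal spells out precisely this sketch, in the same order and with the same ingredients --- Noether's theorem applied on shell to the diffeomorphism and super symmetry invariances, Weyl and super Weyl invariance forcing the trace and gamma-trace to vanish, the Cauchy--Riemann identification from the deformation theorem, and the descent to cotangent vectors via the quotient description of the tangent space --- so it matches the paper's intended argument.

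One small point worth tightening: in the diffeomorphism step you write that at $\chi=0$ one recovers $\Div T=0$, but the proposition claims holomorphicity for general $\chi$. You should make explicit that the coupled divergence identities for $T$ and $J$ (with the $\chi$-dependent cross terms) are exactly the coupled Cauchy--Riemann system that the deformation theorem already solved, rather than only invoking the $\chi=0$ case; the paper's deformation proof treats the general $\chi\neq 0$ situation via the coupled elliptic system, and that is the result you need to quote here.
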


Similar to the case of Riemann surfaces and the harmonic action functional we hope that the action functional~\eqref{eq:AFSRS} may be helpful to derive further results about the moduli space of super Riemann surfaces.
 \addsec{Summary}
We have established the relation between the super symmetric action functional~\eqref{AF} and super Riemann surfaces.
That is, we have shown that for a particular underlying even manifold \(\Smooth{M}\) of the super Riemann surface \(M\) the integral \(A(M,\Phi)\) reduces to the action functional \(A(\varphi, g, \psi, \chi, F)\) on \(\Smooth{M}\).

The first step was to define the underlying family of even manifolds \(\Smooth{M}\to B\) of a family of super manifolds \(M\to B\).
The underlying even manifolds is in between the super manifold \(M\) and the completely reduced space of \(M\), as it still involves odd functions from the base \(B\).

With the help of the underlying even manifold \(\Smooth{M}\) we were able to show that the structure of a super Riemann surface \(M\) is completely determined by an underlying even manifold \(\Smooth{M}\) together with a metric \(g\), a spinor bundle \(S\) and a spinor valued differential form \(\chi\), called gravitino.
The redundancy in the choice of \(g\), \(S\), and \(\chi\) could be shown to coincide with the conformal, super Weyl and super symmetry invariance of the action \(A(\varphi, g, \psi, \chi, F)\).
Infinitesimal deformations of the super Riemann surface can be expressed via infinitesimal deformations of \(g\) and \(\chi\), reproducing the classical result that even infinitesimal deformations of \(M\) are given by holomorphic sections of \(T^\vee M\otimes T^\vee M\), whereas odd infinitesimal deformations are given by holomorphic sections of \({\left(S^\vee\right)}^{\otimes 3}\).

As an outlook, the striking similarities of \(A(M,\Phi)\) with the functional of harmonic maps on Riemann surfaces, together with the results presented in this paper, give rise to the hope that the action functional \(A(M, \Phi)\) and its critical points may be useful to study the moduli space of super Riemann surfaces.
On one hand, the geometry of super Riemann surfaces and their moduli involve the integrability conditions~\eqref{eq:FrameIntegrabilityConditions}.
On the other hand, however, the characterization of super Riemann surfaces in terms of metrics and gravitinos is not obstructed.
Due to Theorem~\ref{thm:AF}, the action functional \(A(\varphi, g, \psi, \chi, F)\) in terms of metric and gravitino is well defined on the moduli space of super Riemann surfaces.
 
\addsec{Acknowledgement}
We wish to thank Ron Donagi for helpful comments on earlier versions of this paper.
The second author wants to thank the International Max Planck Research School Mathematics in the Sciences for financial support.
The research leading to these results has received funding from the European Research Council under the European Union's Seventh Framework Programme (FP7/2007--2013) / ERC grant agreement nº 267087.

\printbibliography

\end{document}